\documentclass[twoside]{article}
\usepackage{amsmath,amssymb,amsthm,latexsym,bm,indentfirst,wasysym}
\usepackage{graphicx,epsfig,amscd,mathrsfs,multirow,bm}
\usepackage{cite}
\usepackage{caption}
\usepackage{color}

\usepackage{booktabs}
\usepackage[ruled,linesnumbered]{algorithm2e}

\usepackage{caption}
\usepackage[colorlinks,linkcolor=red,anchorcolor=red,citecolor=blue]{hyperref}
\numberwithin{equation}{section}
\usepackage{titlesec}
\titleformat{\section}{\large\bfseries}{}{0pt}{}

\input xy
\xyoption{all}
\input scrload.tex

\allowdisplaybreaks[4]

\catcode`@=11
\long\def\@makefntext#1{\noindent #1}
\newskip\tabcentering \tabcentering=1000pt plus 1000pt minus 1000pt
\def\MCH#1#2{\setbox0=\hbox{\raise#1\hbox{#2}}\smash{\box0}}

\def\@evenfoot{}\def\@oddfoot{}

\def\@evenhead{\hbox to\textwidth{\small\rm\thepage \hfill
{\it  X. Li, P. Wang}}} 

\def\@oddhead{\hbox to \textwidth{\small{\it
Solve LCS in sequence-to-graph alignment
} \hfill\thepage}}   

\catcode`@=12

\def\bc{\begin{center}}
\def\ec{\end{center}}
\def\no{\noindent}
\def\hang{\hangindent\parindent}
\def\textindent#1{\indent\llap{\qquad #1\ \ \enspace}\ignorespaces}
\def\ref{\par\hang\textindent}

\begin{document}

 \newtheorem{theorem}{Theorem}
 \newtheorem{lemma}{Lemma}
 \newtheorem{corollary}{Corollary}


\abovedisplayskip=6pt plus 1pt minus 1pt \belowdisplayskip=6pt
plus 1pt minus 1pt
\thispagestyle{empty} \vspace*{-1.0truecm} \noindent

\vskip 10mm \bc{\Large\bf 
Polynomial-Time Solutions for Longest Common Subsequence Related Problems Between a Sequence and a Pangenome Graph
\footnotetext{\footnotesize
Supported by Guizhou Provincial Basic Research Program (Natural Science) (Grant No.\,  ZK[2022]020) and the National Natural Science Foundation of China (Grant No. 62362007).\\
* Corresponding author\\  
 E-mail address: xingfuli@mail.gufe.edu.cn
(Xingfu  Li)
} } \ec  

\vskip 5mm
\bc{\bf  Xingfu Li$^{1*}$, Yongping Wang$^{2}$}\\  
{\small\it  $1$. College of Big Data Statistics, Guizhou University of Finance and Economics,  Guiyang $550025$, P. R. China;\\
$2$. Algorithm and Computational Complexity Laboratory of School of Information, Guizhou University of Finance and Economics, Guiyang $550025$, P. R. China
}\ec   

\vskip 1 mm

{\narrower\noindent{\small {\small\bf Abstract}\ \   
A pangenome captures the genetic diversity across multiple individuals simultaneously, providing a more comprehensive reference for genome analysis than a single linear genome, which may introduce allele bias.
  A widely adopted pangenome representation is a node-labeled directed  graph, wherein the paths correspond to plausible genomic sequences within a species. Consequently, evaluating sequence-to-pangenome graph similarity constitutes a fundamental task in pangenome construction and analysis.
  This study explores the Longest Common Subsequence (LCS) problem and three of its variants involving a sequence and a pangenome graph. We present four polynomial-time reductions that transform these LCS-related problems into the longest path problem in a directed acyclic graph (DAG). These reductions demonstrate that all four problems can be solved in polynomial time, establishing their membership in the complexity class P.

\vspace{1mm}\baselineskip 12pt

\no{\small\bf Keywords} \ \ Longest common subsequence; Sequence alignment; Pangenome graph; Algorithm; Complexity 

\par

\vspace{2mm}

\no{\small\bf MR(2020) Subject Classification\ \ {\rm 05C85; 68Q17; 68R10; 92D20; 92-08}} 

}}

\baselineskip 15pt

\section{1. Introduction}
\label{Intro}
A pangenome offers a systematic and inclusive representation of the full genetic diversity across individuals, serving as a more comprehensive and unbiased foundation for genomic studies compared to conventional linear reference genomes, which often introduce allelic bias. As an integrated repository of all gene sequences in a species—including core genes shared by all individuals and accessory genes found only in specific subpopulations—it has become a pivotal resource in computational biology. This framework facilitates high-resolution comparative analyses to uncover genetic variation, infer evolutionary relationships, and explore functional differences among populations, deepening our understanding of biodiversity and adaptive mechanisms \cite{ebler2022pangenome}. The predominant model for representing pangenomes is the vertex-weighted directed graph \cite{baaijens2022computational}, which facilitates sophisticated analyses of genomic variations across populations.

Sequence alignment to a pangenome has emerged as a central endeavor in computational biology, serving to precisely measure the similarity between a query sequence and a comprehensive, species-specific reference pangenome \cite{eizenga2020pangenome}.
When a pangenome is modeled as a directed graph, aligning sequences to the pangenome naturally gives rise to the \emph{sequence-to-pangenome graph alignment problem}, which can be formally defined as the computational challenge of finding an optimal path within a pangenome graph such that the sequence reconstructed along this path achieves the highest possible similarity to a given query sequence \cite{chandra2023gap}.

Aligning a sequence to a pangenome graph is to identify the similarity between them.   Discovering the similarity of biological sequences is a fundamental problem in bioinformatics and has attracted a lot of attention in many applications such as  cancer diagnosis \cite{Aravanis2017NextGenerationSO} and detection of the species common origin \cite{zvelebil2007understanding}. In the field of sequence comparison and analysis, Longest Common Subsequence (LCS) and minimum edit distance are two classical measurement methods. Although both are used to discover the similarity or dissimilarity between sequences, LCS, as an alternative metric, has been adopted to measure the similarities among sequences \cite{banerjee2024longest}, and demonstrates unique advantages in certain scenarios \cite{sahlin2021effective}.  
It is asked to find  the longest common subsequence between two or more sequences. Numerous advanced algorithms have emerged to address the LCS problem among sequences, such as the textbook dynamic programming algorithm \cite{hirschberg1977algorithms}, Fixed-gap LCS \cite{iliopoulos2008algorithms,banerjee2024longest} and longest factor with gaps \cite{banerjee2024longest}.

In this paper,  we rigorously examines the longest common subsequence  between a specified sequence and a pangenome graph, together with three closely associated computational problems.
We establish four reductions respectively from these four problems to longest path problem in a directed-acyclic graph (DAG), which proves that all these four problems are polynomial-time solvable. 

The paper is organized as follows.
Section \ref{pre} introduces basic terminologies, as well as formal definitions of these four LCS-related problems between a sequence and a pangenome graph. At the end of Section \ref{pre}, 
we propose an efficient algorithm to address the longest path problem in a DAG, serving as a fundamental sub-procedure for tackling the four LCS-related problems explored in this study.
In the subsequent Section \ref{lower-bound}, we demonstrate that finding a longest common subsequence between a sequence and a pangenome graph can not be addressed in sub-quadratic time.  
We then transform each of the four LCS-related problems into the problem of finding the longest path in a specific DAG, as detailed in Section~\ref{reduction}. 
We analyze the time complexity of solving these problems via reductions in Section \ref{complexity-analysis}.
Finally, we conclude this paper and propose some future work in Section \ref{conclusion}.

\section{2. Preliminary}
\label{pre}
The \emph{length} of a sequence $S$, denoted by $|S|$, refers to the total number of elements it contains. Indices, represented as subscripts, indicate the positions of elements within the sequence, and in this work, all indexing starts from zero. We define $S[i,\dots, j]$ as the substring of $S$ spanning from index $i$ to index $j$, inclusive. For instance, consider the sequence $S$ = "$ababc$"; the leftmost '$a$' is located at position $0$, whereas the rightmost '$a$' appears at $2$. Furthermore, the substring $S[1,\dots, 4]$ yields "$babc$".

A \emph{pangenome graph} $G = (V, E, \delta)$ is a directed graph with vertex labeled, where $V$ denotes the set of vertices, $E$ represents the set of directed edges, and $\delta: V \rightarrow \Sigma^* \setminus \{\epsilon\}$ is a labeling function that assigns to each vertex a non-empty finite string over a given alphabet $\Sigma$, with $\epsilon$ denoting the empty string.
In this paper,  the sequence associated with  a vertex $v\in V(G)$ is written as $\delta(v)$. 
A path $P = u_0, u_1, u_2, \dots, u_k$ in a directed graph $G$ is defined as a sequence of vertices where each consecutive pair $(u_i, u_{i+1})$ is connected by a directed edge for all $0 \leq i \leq k-1$. The sequence $spell(P) = \delta(u_0)\delta(u_1)\dots\delta(u_k)$, referred to as the \emph{spell} of path $P$, is formed by concatenating the vertex labels along the path in order. More precisely, in a pangenome graph, the sequence spelled by a path is obtained by successively joining the sequence associated with each vertex as they appear along the path. We say that vertex $u$ reaches vertex $v$, or equivalently that $v$ is reachable from $u$, if there exists a directed path from $u$ to $v$ in $G$.

\subsection{Longest common subsequence related problems between a sequence and a pangenome graph}
\label{problems}
Given a sequence $Q$ and a pangenome graph $G$, if there is a subsequence $S$ 
and a path $P$ in $G$ such that $S$ is a common subsequence between  $spell(P)$ and $Q$, then $S$ is said to be a common subsequence between $Q$ and $G$. 

\textbf{Problem 1} (Longest Common Subsequence  between a Sequence and a pangenome Graph (LCS-SG) problem). Given a sequence $Q$ and a pangenome graph $G$,   find a longest common subsequence between $Q$ and $G$. 

Let $S$ denote a common subsequence between a sequence $Q$ and a pangenome graph $G$. Define $\text{ind}(S_i, Q)$, where $0 \leq i \leq |S| - 1$, as the index of element $S_i$ in $Q$. For instance, consider $Q = "xyabcahde"$ and $S = "acah"$. Then, $\text{ind}(S_0, Q) = 2$, $\text{ind}(S_1, Q) = 4$, $\text{ind}(S_2, Q) = 5$, and $\text{ind}(S_3, Q) = 6$.
A subsequence $S$ of $Q$ 
is said to be a \emph{$(k_1,k_2)$-gap common subsequence} between $Q$ and $G$ if there is a path $P$ in $G$ such that the following conditions hold: (1) $S$ is a common subsequence between $spell(P)$ and $Q$; (2) $0<ind(S_i,Q)-ind(S_{i-1},Q)\leq k_1$  for every $1\leq i\leq |S|-1$; and (3) $0<ind(S_i,spell(P))-ind(S_{i-1},spell(P))|\leq k_2$  for every $1\leq i\leq |S|-1$.

\textbf{Problem 2} (Fixed Gap Longest Common Subsequence between a Sequence and a pangenome Graph (FGLCS-SG) problem).  Given a sequence $Q$, a pangenome graph $G$ and two additional integers $k_1,k_2>0$. Its aim is to find a longest $(k_1,k_2)$-gap common subsequence  between $Q$ and $G$. 

Let $Q$ be a sequence and $G=(V,E,\delta)$ be a pangenome graph. A triple, $(v,[i_h,\dots,i'_h], [j_h,\dots,j'_h])$, is said to be an \emph{exact match} (or simply a \emph{match} or \emph{seed}) between $Q$ and $G$, if $\delta(v)[i_h,\dots, i'_h]$ $=$ $Q[j_h,\dots,j'_h]$, where $v\in V(G)$.  A match $(v, [i_h,\dots,i'_h],$ $[j_h,\dots,j'_h])$ is \emph{left-maximal} if $\delta(v)_{i_h-1}$ $\not =$ $Q_{j_h-1}$ or $i_h$ $=$ $0$ or $j_h$ $=$ $0$. A match $(v, [i_h,\dots,i'_h],$ $[j_h,\dots,j'_h])$ is \emph{right-maximal} if $\delta(v)_{i'_h+1}$ $\not =$ $Q_{j'_h+1}$ or $i'_h$ $=$ $|\delta(v)|-1$ or $j'_h$ $=$ $|Q|-1$. If a match is both left and right maximal,  it is a Maximal Exact Match (MEM). 

For a given match $m = (v, [i_h,\dots,i'_h], [j_h,\dots,j'_h])$, we define its components as follows: $m_0$ represents the vertex $v$, $m_1$ denotes the sequence of indices $[i_h,\dots,i'_h]$, and $m_2$ corresponds to the sequence $[j_h,\dots,j'_h]$. Additionally, the notations $|m_1|$ and $|m_2|$ are used to indicate the number of indices contained in $m_1$ and $m_2$, respectively, effectively capturing the length or size of each index list. Note that the length of $m_1$ is equal to that of $m_2$, i.e., $|m_1|$ $=$ $|m_2|$, since $m$ is an exact match. 

Two seeds $(v_1, [i_1,\dots,i'_1], [j_1,\dots,j'_1])$ and $(v_2, [i_2,\dots,i'_2], [j_2,\dots,j'_2])$, representing local exact matches between a query sequence $Q$ and a pangenome graph $G$, are said to be \emph{strictly ordered} or \emph{in strict order} when  $j'_1 < j_2$, and one of the following conditions is satisfied: if the two seeds map to the same node in the graph $G$, i.e.,  $v_1 = v_2$, then $i'_1 < i_2$; otherwise, if $v_1 \neq v_2$, the node $v_1$ must be able to reach $v_2$ via a valid path in the directed structure of the pangenome graph $G$, ensuring a consistent genomic order is preserved.
The two strictly ordered seeds are denoted as $(v_1,$$[i_1,\dots,i'_1],$ $[j_1,\dots,j'_1])$ $\prec$ $(v_2,$$[i_2,\dots,i'_2],$ $[j_2,\dots,j'_2])$ throughout this paper.

Let $\mathcal{M}$ denote a subset of MEMs between a sequence $Q$ and a pangenome graph $G$.
 The length of $\mathcal{M}$ is defined as  $len(\mathcal{M})$ $=$ $\sum_{m\in\mathcal{M}}|m_1|$ $=$ $\sum_{m\in\mathcal{M}}|m_2|$. If the seeds in $\mathcal{M}$ can be arranged linearly such that every pair of adjacent seeds is strictly ordered, then $\mathcal{M}$ is termed a \emph{strictly ordered subset}.

 \textbf{Problem 3} (MEM Chain (MEMC) problem). Given a subset of MEMs, $\mathcal{M}$,  and a pangenome graph $G=(V,E,\delta)$, find a strictly ordered subset $\mathcal{M}'$  from $\mathcal{M}$ such that $len(\mathcal{M}')$ is maximized. 

 If we wish to find a path in a pangenome graph so that the path can traverse through as many matches as possible, rather than care about the length of sequence associated with each vertex, we shall come up with the Max Seed Problem. 

\textbf{Problem 4} (Max Seed Problem (MSP)). Given a sequence $Q$; a pangenome graph $G=(V,E,\delta)$; a set of seeds $\mathcal{S}$  between $Q$ and $G$.  MSP asks to find  a path in $G$ so that the path passes through maximum number of  seeds from $\mathcal{S} $ in a strict order.

\subsection{Longest path on DAGs}
\label{LP-DAG}
A Directed Acyclic Graph (DAG) is a graph where every edge has a direction, and no cycles exist. This means that starting from any vertex and traversing along the directed edges, one can never return to the starting point. DAGs possess numerous intriguing properties and are widely used to represent real-world structures, such as task dependencies and version control systems. 

In a DAG, the \emph{edge-weighted} longest path problem entails identifying the path of maximum length from any source vertex to any sink vertex, where the path length is defined as the sum of its edge weights. Although computing shortest paths in cyclic graphs can be efficiently achieved—e.g., via Dijkstra’s algorithm \cite{cormen2022introduction}—the longest path problem in such graphs is NP-hard \cite{garey2002computers}. Fortunately, the absence of cycles in DAGs enables an efficient solution to the longest path problem.

The key to solving the longest path problem in a DAG lies in combining two powerful techniques: topological sorting, which orders vertices to respect edge directions, and dynamic programming, which builds solutions incrementally by reusing prior results.

Topological sorting refers to a linear ordering of vertices in a DAG such that for every directed edge $(u, v)$, vertex $u$ precedes vertex $v$ in the sequence. This ordering is feasible exclusively for DAGs, as cycles would prevent a valid order, and notably, the topological order is not unique—multiple valid sequences may exist. Topological sorting on a given DAG $G$ can be efficiently computed in $O(|V(G)| + |E(G)|)$ time  using the following procedure \cite{cormen2022introduction}. First, perform a depth-first search on $G$ to determine the finishing-visited time for each vertex of $G$. Then, after each vertex finishes, insert it onto the front of a linked list. The resulting  list is a topological sorting vertices. The correctness of this method, along with its linear-time complexity, is  established in \cite{cormen2022introduction}.

Let us explore how dynamic programming can be effectively employed to determine the longest path. The core idea involves processing vertices in topological order while dynamically tracking the length of the longest path terminating at each vertex.
Let $Q$ be a topological sorting vertex list of a given DAG $G$. Let $dist^{e}(v)$ be  the edge-weighted longest path length ending at the vertex $v$. A vertex $u$ is an \emph{in-neighbor} of another vertex $v$ in $G$ if there is a directed edge from $u$ to $v$.
Let $N^{-}(v)$ be the set of in-neighbors of vertex $v$ in the graph $G$. 
Since each edge in $G$ follows the order of $Q$, the iterative relationship \ref{edge-weighted-relation} holds for every index $i: 0\leq i <|Q|-1$ of $Q$: 
\begin{equation}\label{edge-weighted-relation}
    dist^{e}(Q_i) = \begin{cases}
                0                                               & N^{-}(Q_i) = \emptyset\\
                \max_{u\in N^{-}(Q_i)}\Big\{dist^{e}(u) + weight\big((u,Q_i)\big)\Big\} & N^{-}(Q_i)\not=\emptyset
               \end{cases},
\end{equation}
where the index of $u\in N^{-}(Q_i)$ for $1\leq i <|Q|-1$ must be less than $i$ in $Q$.

The complete algorithm for solving the edge-weighted longest path problem on a DAG is outlined in Algorithm \ref{edge-longest-alg}. As this algorithm processes each vertex and edge a constant number of times, its time complexity is $O\big(|V(G)| + |E(G)|\big)$ for any given DAG $G$.

\begin{algorithm}
    \caption{Edge-weighted Longest path on a DAG\label{edge-longest-alg}}
    \KwIn{A directed-acyclic graph $G=(V,E)$ in which each edge $(u,v)\in E$ has non-negative weight $weight\big((u,v)\big).$}
    \KwOut{An edge-weighted longest path in $G$.}
    Perform a topological sort of $G$\;
    Initialize  arrays $dist$ and $\pi$ respectively to store  longest path lengths and the parent vertex in the longest path ending at each vertex\; 
    Set $dist[v] = 0$ and $\pi[v] = \emptyset$ for each vertex $v$ initially\;
    \For{each vertex $u$ in the topological order}
    {
        \For{each out-neighbor $v$ of $u$ \tcc*{A vertex $v$ is an out-neighbor of $u$ if there is a directed edge from $u$ to $v$.}} 
        {
            \If{$dist[u] + weight\big((u,v)\big) > dist[v]$}
            {
            Update $dist[v] =  dist[u] + weight\big((u, v)\big)$\;
            $\pi[v]$ $\leftarrow$ $u$\;
            }

        }
    }
    \Return the maximum value in $dist$ array and the list $\pi$.
\end{algorithm}

 Given a DAG where each vertex is assigned a non-negative weight, the vertex-weighted longest path problem seeks to identify a path that maximizes the sum of the vertex weights along the path. Let $dist^{v}(u)$ denote the maximum total vertex weight of any path ending at vertex $u$, representing the longest vertex-weighted path length terminating at $u$.

 The longest path length ending at each vertex in the vertex-weighted longest path problem exhibits an iterative relationship in equation \ref{vertex-weighted-relation} which is similar to the edge-weighted one  expressed in equation \ref{edge-weighted-relation}.
\begin{equation}\label{vertex-weighted-relation}
    dist^{v}(Q_i) = \begin{cases}
                weight(Q_i)                                               & N^{-}(Q_i) = \emptyset\\
                \max_{u\in N^{-}(Q_i)}\big\{dist^{v}(u) + weight(Q_i)\big\} & N^{-}(Q_i)\not=\emptyset
               \end{cases}
\end{equation}

Algorithm \ref{ver-longest-alg} presents the complete procedure for solving the vertex-weighted longest path problem in a DAG. Analogous to Algorithm \ref{edge-longest-alg}, it achieves a time complexity of $O\big(|V(G)| + |E(G)|\big)$ for a given DAG $G$, making it efficient in practice.

\begin{algorithm}
    \caption{Vertex-weighted Longest path on a DAG\label{ver-longest-alg}}
    \KwIn{A directed-acyclic graph $G=(V,E)$ in which each vertex $v\in V$ has non-negative weight $weight(v).$}
    \KwOut{A vertex-weighted longest path in $G$.}
    Perform a topological sort of the graph\;
    Initialize  arrays $dist$ and $\pi$ respectively to store  longest path lengths and  parent vertex in the longest path terminating at each vertex\; 
    Set $dist[v] = weight(v)$ and $\pi[v] = \emptyset$ for each vertex $v$ initially\;
    \For{each vertex $u$ in the topological order}
    {
        \For{each out-neighbor $v$ of $u$ \tcc*{A vertex $v$ is an out-neighbor of $u$ if there is a directed edge from $u$ to $v$.}} 
        {
            \If{$dist[u] + weight(v) > dist [v]$}
            {
            Update $dist[v] = dist[u] + weight(v)$\;
            $\pi[v]$ $\leftarrow$ $u$\;
            }
        }
    }
    \Return the maximum value in $dist$ array and the list $\pi$.
\end{algorithm}

\section{3. A lower bound for the longest common subsequence problem between a sequence and a pangenome graph}
\label{lower-bound}
The longest common subsequence (LCS) problem between two sequences is a classic challenge in computer science and bioinformatics, known to lack a sub-quadratic time algorithm \cite{bringmann2015quadratic}, under the assumption of the Strong Exponential Time Hypothesis (SETH), which suggests that in practice no significantly faster algorithm exists for solving it exactly. This computational hardness has important implications for applications requiring efficient sequence comparison, such as genome analysis and text processing. 

Given two sequences $Q_1, Q_2 \in \Sigma^{*}$ over an alphabet $\Sigma$, where $\Sigma^{*}$ denotes the set of all possible strings formed from $\Sigma$, we construct a pangenome graph $\hat{G}$ comprising a single vertex $v$ and assign the label $\delta(v) = Q_2$, effectively encoding one of the sequences directly into the graph structure. By aligning $Q_1$ against this simplified graph representation, we transform the traditional LCS problem into a graph-based variant, thereby yielding an instance of the LCS-SG  problem, which extends the classical formulation to more complex genomic representations.

\begin{lemma}
    \label{LCS-preserve-opt}
    There is an LCS between $Q_1$ and $Q_2$ with length at least $\ell$ if and only if there is an LCS between $Q_1$ and $\hat{G}$ with length at least $\ell$.
\end{lemma}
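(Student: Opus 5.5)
The plan is to show that the two instances have exactly the same set of common subsequences. Both directions of the equivalence then follow immediately, because a longest common subsequence of length at least $\ell$ exists on one side exactly when some common subsequence of length at least $\ell$ exists.

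The key step is to describe all paths of $\hat{G}$. The graph $\hat{G}$ has the single vertex $v$ and no edges; in particular there is no self-loop, and we build it that way. So the only path in $\hat{G}$ is the one-vertex path $P=v$, whose spell is $spell(P)=\delta(v)=Q_2$.

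For the forward direction, let $S$ be a common subsequence of $Q_1$ and $Q_2$ with $|S|\ge \ell$. Then $S$ is a common subsequence of $Q_1$ and $spell(P)$. By the definition in Section \ref{problems}, $S$ is therefore a common subsequence between $Q_1$ and $\hat{G}$. Hence the LCS between $Q_1$ and $\hat{G}$ has length at least $\ell$.

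For the converse, let $S$ be a common subsequence between $Q_1$ and $\hat{G}$ with $|S|\ge\ell$. By definition it is witnessed by some path $P$ of $\hat{G}$. By the key step, $P=v$, so $S$ is a subsequence of both $Q_1$ and $Q_2$. Hence the LCS of $Q_1$ and $Q_2$ has length at least $\ell$.

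The only point needing care is the path description. If the construction were read as allowing a self-loop on $v$, paths would spell $Q_2^k$, and the converse direction could fail. I would therefore state explicitly that $E(\hat{G})=\emptyset$. Everything else is unwinding the definitions.
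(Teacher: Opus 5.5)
Your proposal is correct and follows essentially the same route as the paper. Both arguments rest on the observation that the only path in $\hat{G}$ is the single vertex $v$, whose spell is $Q_2$, so the common subsequences of $Q_1$ with $Q_2$ and with $\hat{G}$ coincide. Your explicit point that $E(\hat{G})=\emptyset$ (so no self-loop can spell $Q_2^k$) is a detail the paper's lemma proof leaves implicit, though the paper does state that $\hat{G}$ has no edges in the proof of the theorem that follows.
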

\begin{proof}
    Since the graph $\hat{G}$ contains only a single vertex, and the sequence associated with this vertex is one of the two sequences $Q_2$, a common subsequence between $Q_1$ and $Q_2$ is equivalent to a common subsequence between $Q_1$ and $\hat{G}$, and vice versa. Thus, this lemma is valid.
\end{proof}

\begin{theorem}
    \label{LCS-SG-lower-bound}
    The LCS-SG problem can not be solved in sub-quadratic time conditioned on SETH. 
\end{theorem}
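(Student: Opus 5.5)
The argument is a fine-grained reduction by contradiction from the classical two-sequence LCS problem, using the construction of $\hat{G}$ and Lemma~\ref{LCS-preserve-opt}. Suppose, towards a contradiction, that some algorithm $\mathcal{A}$ solves LCS-SG in time $O\big((|Q|\cdot N)^{1-\varepsilon}\big)$ for some constant $\varepsilon>0$. Here $N$ denotes the total size of the graph instance, namely $|V|+|E|+\sum_{v\in V}|\delta(v)|$. Under SETH, the result of \cite{bringmann2015quadratic} says that LCS of two sequences $Q_1,Q_2$ of length $n$ cannot be computed in $O(n^{2-\varepsilon})$ time for any $\varepsilon>0$. I will show that $\mathcal{A}$ would break this bound.

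The reduction runs as follows. Given $Q_1,Q_2$, build $\hat{G}$ with the single vertex $v$, label $\delta(v)=Q_2$, and no edges. This takes $O(|Q_2|)$ time, and the instance has size $N=\Theta(|Q_2|)$. Now run $\mathcal{A}$ on $(Q_1,\hat{G})$. Every path in $\hat{G}$ is the single vertex $v$, whose spell is exactly $Q_2$. So, as in the proof of Lemma~\ref{LCS-preserve-opt}, the set of common subsequences between $Q_1$ and $\hat{G}$ coincides with the set of common subsequences between $Q_1$ and $Q_2$. Hence the returned subsequence is an LCS of $Q_1$ and $Q_2$. Applying the lemma for every threshold $\ell$ also gives equality of the optimal lengths, so the decision version transfers too.

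For the time accounting, the total running time is $O(n)+O\big((n\cdot n)^{1-\varepsilon}\big)=O(n^{2-2\varepsilon})$, which is truly sub-quadratic. This contradicts the SETH-based lower bound of \cite{bringmann2015quadratic}. Therefore no algorithm solves LCS-SG in time $O\big((|Q|\cdot N)^{1-\varepsilon}\big)$ unless SETH fails.

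The only delicate step is fixing what ``sub-quadratic'' means for a graph instance. The bound must be stated in terms of the product $|Q|\cdot\sum_{v\in V}|\delta(v)|$, or equivalently $n^{2-\varepsilon}$ when both inputs have size $\Theta(n)$. With this convention the linear-size reduction above preserves sub-quadratic running time exactly. The hardness also holds already on graphs with one vertex, so no restriction on the graph's structure can rescue a sub-quadratic bound.
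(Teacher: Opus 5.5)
Your proposal is correct and is essentially the paper's proof. It uses the same single-vertex, edgeless graph $\hat{G}$ with $\delta(v)=Q_2$, built in linear time, together with Lemma 1's equivalence of common subsequences. You also make explicit two things the paper leaves implicit: how ``sub-quadratic'' is measured (via $|Q|\cdot N$, or $n^{2-\varepsilon}$ for inputs of size $\Theta(n)$), and the resulting $O(n)+O(n^{2-2\varepsilon})$ time accounting.
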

\begin{proof}
    Let $Q_1$ and $Q_2$ be two sequences over an alphabet. The construction of $\hat{G}$ begins by creating a single vertex and then assigning $Q_2$ as the sequence associated with this vertex. There is not any edge in $\hat{G}$. So the construction takes totally linear-time.  
    Furthermore, by virtue of Lemma \ref{LCS-preserve-opt}, the validity of this theorem is rigorously established.
\end{proof}

\begin{corollary}
    \label{FGLCS-SG-lower-bound}
    The FGLCS-SG problem can not be solved in sub-quadratic time conditioned on SETH.
\end{corollary}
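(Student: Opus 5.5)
The plan is to reduce LCS-SG to FGLCS-SG in linear time by choosing the gap parameters so large that the gap constraints never bind. Since LCS-SG already has a SETH-conditioned quadratic lower bound (Theorem \ref{LCS-SG-lower-bound}), any sub-quadratic algorithm for FGLCS-SG would then give one for LCS-SG, and hence for classical LCS, which is a contradiction.

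Concretely, start from an instance $(Q_1,\hat{G})$ of LCS-SG arising from the construction in Section \ref{lower-bound}: $\hat{G}$ has a single vertex $v$ with $\delta(v)=Q_2$ and no edges. Build the FGLCS-SG instance $(Q_1,\hat{G},k_1,k_2)$ with $k_1=|Q_1|$ and $k_2=|Q_2|$. Writing these integers down takes $O(\log|Q_1|+\log|Q_2|)$ time, so the reduction is linear overall.

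The key step is an analogue of Lemma \ref{LCS-preserve-opt}: there is a common subsequence of length at least $\ell$ between $Q_1$ and $\hat{G}$ if and only if there is a $(k_1,k_2)$-gap common subsequence of length at least $\ell$.

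\begin{itemize}
\item The backward direction is immediate from the definitions, since a gap common subsequence is in particular a common subsequence.
\item For the forward direction, take any common subsequence $S$ with its embeddings into $Q_1$ and into $spell(P)=Q_2$. The only path in $\hat{G}$ is the single vertex $v$, so $P=v$. The index differences $ind(S_i,Q_1)-ind(S_{i-1},Q_1)$ are positive because the embedding is strictly increasing. They are at most $|Q_1|-1<k_1$ because all indices lie in $[0,|Q_1|-1]$. The same argument bounds the differences in $Q_2$ by $k_2$.
\end{itemize}

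So the gap conditions (2) and (3) hold automatically, and optimal values coincide. Combining this with Theorem \ref{LCS-SG-lower-bound}, or directly with the SETH lower bound of \cite{bringmann2015quadratic} via Lemma \ref{LCS-preserve-opt}, gives the corollary.

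The main obstacle is mostly a matter of care rather than difficulty. One needs to note that the index notation $ind(\cdot,\cdot)$ refers to a fixed embedding, so that positivity of consecutive differences is genuine. One also needs the input size of the FGLCS-SG instance, which includes $k_1,k_2$, to remain $O(|Q_1|+|Q_2|)$, so that sub-quadratic in the new instance means sub-quadratic in the original one.
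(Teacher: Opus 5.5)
Your proposal is correct and uses the same idea as the paper: choose the gap parameters so large that the gap constraints never bind, so LCS-SG instances embed into FGLCS-SG. The paper does this with $k_1=k_2=+\infty$ on general instances. Your finite choice $k_1=|Q_1|$, $k_2=|Q_2|$ on the single-vertex instances $\hat{G}$ is a small tightening. Problem 2 requires $k_1,k_2$ to be positive integers, and your explicit bound on the index differences checks conditions (2) and (3) directly instead of treating them as vacuous.
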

\begin{proof}
    In instances of the LCS-SG problem, we further introduce two parameters $k_1 = +\infty$ and $k_2 = +\infty$, effectively transforming it into the FGLCS-SG problem. In other words, the LCS-SG problem constitutes a sub-problem of the FGLCS-SG problem, distinguished by the absence of gap constraints in the former.  According to Theorem~\ref{LCS-SG-lower-bound}, the FGLCS-SG problem cannot be solved in sub-quadratic time, establishing a fundamental lower bound for its computational complexity.
\end{proof}

\section{4. Problem transformations}
\label{reduction}
First, we investigate the LCS-SG problem.
We construct a new directed graph $H$ as follows:

The vertex set is defined by
$V(H) = \big\{ v_{(i,u,f)} \mid 0 \leq i \leq |Q| - 1,\, u \in V(G),\, \text{and}~Q_i = \delta(u)_f~\text{for some}~0 \leq f \leq |\delta(u)|-1 \big\}$. 

The edge set is given by
$E(H) = \big\{ (v_{(i,u,f)}, v_{(i',u',f')}) \mid 0 \leq i < i' \leq |Q| - 1,\, u, u' \in V(G),\, 0 \leq f \leq |\delta(u)|-1,\, 0 \leq f' \leq |\delta(u')|-1,~\text{and}$
$ \text{either}~u \ne u'~\text{and there exists a path from}~u~\text{to}~u'~\text{in}~G,, \text{or}~u = u'~\text{and}~f < f'$.$\big\}$

\begin{lemma}\label{H-is-DAG}
    $H$ is a DAG.
\end{lemma}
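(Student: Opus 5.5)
The plan is to exhibit a strictly increasing potential along the edges of $H$, which rules out directed cycles immediately. The natural potential is the first coordinate of each vertex. By the definition of $E(H)$, every edge $(v_{(i,u,f)}, v_{(i',u',f')})$ satisfies $i < i'$, where $i$ and $i'$ are positions in $Q$. So I will define $\phi(v_{(i,u,f)}) = i$ and observe that $\phi$ strictly increases along every edge of $H$.

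The argument then runs as follows.
\begin{enumerate}
\item Take a hypothetical directed cycle $v^{0}, v^{1}, \dots, v^{t}=v^{0}$ in $H$ with $t\ge 1$.
\item Apply the edge condition to each consecutive pair. This gives $\phi(v^{0}) < \phi(v^{1}) < \dots < \phi(v^{t}) = \phi(v^{0})$.
\item Conclude that $\phi(v^{0}) < \phi(v^{0})$, a contradiction. Hence $H$ has no directed cycle, and in particular no self-loop, since a self-loop would require $i<i$.
\end{enumerate}

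This also yields a topological order of $V(H)$ directly: sort the vertices by $i$ and break ties arbitrarily. That order is what Algorithm~\ref{ver-longest-alg} needs later.

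No step here is a real obstacle. The only point needing care is that cycles in $G$ do not create cycles in $H$. Even when $u$ reaches $u'$ and $u'$ reaches $u$ in $G$, the requirement $i<i'$ on the $Q$-index forbids returning to an earlier vertex. I will state this remark explicitly, because $G$ itself need not be acyclic, and the reachability clause alone would not give acyclicity.
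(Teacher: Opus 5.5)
Your proof is correct and matches the paper's: the paper also assumes a directed cycle in $H$ and uses the $i<i'$ condition on every edge to get $i_0<i_1<\dots<i_k<i_0$, a contradiction. Your extra remarks, that sorting by the $Q$-index gives a topological order and that cycles in $G$ cannot produce cycles in $H$, are valid but not needed for the lemma.
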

\begin{proof}
Suppose there is a cycle $C = v_{(i_0,u_0,f_0)}, v_{(i_1,u_1,f_1)},v_{(i_2,u_2,f_2)},\dots,v_{(i_k,u_k,f_k)}, v_{(i_0,u_0,f_0)}$. Based on the construction of $H$, we obtain $i_0 < i_1 < i_2 < \dots < i_k  <i_0$, which implies $i_0 < i_0$, yielding a contradiction. Hence, $H$ must be a DAG.
\end{proof}

\begin{lemma}\label{H-path-to-CS}
    A directed path in $H$ indicates a common subsequence between $Q$ and $G$. 
\end{lemma}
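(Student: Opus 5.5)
Take a directed path $v_{(i_0,u_0,f_0)}, v_{(i_1,u_1,f_1)}, \dots, v_{(i_k,u_k,f_k)}$ in $H$. Set $S = Q_{i_0}Q_{i_1}\cdots Q_{i_k}$. The goal is to show that $S$ is a subsequence of $Q$ and also of $spell(P)$ for a single path $P$ in $G$. Then $S$ is a common subsequence between $Q$ and $G$ by definition.

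The $Q$ side is immediate. As in the proof of Lemma~\ref{H-is-DAG}, every edge of $H$ strictly increases the first index, so $i_0<i_1<\dots<i_k$. Hence $S$ is a subsequence of $Q$.

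For the graph side, I build a walk in $G$ and read off positions in its spell.
\begin{itemize}
\item Group consecutive vertices of the $H$-path into maximal blocks sharing the same $G$-vertex. Within a block the edges are of the type $u=u'$ with $f<f'$, so the offsets strictly increase inside the single label $\delta(u)$.
\item Between consecutive blocks the vertex changes from $u$ to $u'\neq u$. The edge definition then guarantees a directed path from $u$ to $u'$ in $G$.
\item Concatenate one copy of the block vertex with these connecting paths, deleting the duplicated endpoint at each junction. This gives a walk $P$ in $G$ that visits the block vertices in order.
\item For each $H$-vertex $v_{(i_j,u_j,f_j)}$, take position $f_j$ inside the occurrence of $\delta(u_j)$ in $spell(P)$ corresponding to its block. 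By the vertex condition of $H$, that character equals $\delta(u_j)_{f_j}=Q_{i_j}$.
\item These positions are strictly increasing. Within a block this is because $f$ increases. Across blocks it is because later blocks occupy later segments of $spell(P)$.
\end{itemize}
So $S$ is a subsequence of $spell(P)$.

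The main obstacle is that $P$ may not be a simple path. The block vertices are distinct from their immediate neighbours, but a $G$-vertex can reappear in a later, non-adjacent block, and the connecting paths can intersect. $G$ may even have cycles, since the paper never assumes acyclicity of a pangenome graph. The paper defines a path only as a vertex sequence with consecutive edges, so repeated vertices are allowed. I would state this explicitly, note that $P$ satisfies that definition, and conclude.

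A secondary point is an $H$-path with a single vertex. That case is trivial: $P$ is the single vertex $u_0$ and $S=Q_{i_0}$ has length one.
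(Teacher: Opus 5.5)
Your proof is correct and follows the same route as the paper. You take $Q_{i_0}Q_{i_1}\cdots Q_{i_k}$, use $i_j<i_{j+1}$ for the $Q$ side, and use the two edge types for the graph side: the same $G$-vertex with $f_j<f_{j+1}$, or $u_j$ reaching $u_{j+1}$. The paper's proof only asserts the graph-side conclusion in its final sentence. You spell it out by concatenating the connecting paths into a single walk in $G$, and by noting that this walk may repeat vertices, which the paper's definition of a path allows.
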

\begin{proof}
    Let $P=u_{(i_0,u_{0},f_{0})},u_{(i_1,u_{1},f_{1})},\dots,u_{(i_k,u_{k},f_{k})}$ be a directed path in $H$. According to the construction of  $H$, for every vertex $u_{(i_j,u_{j},f_{j})}\in V(P): 0\leq j\leq k$, we have $Q_{i_j} = \delta(u_{j})_{f_{j}}$. 
    In addition, for every pair $\big($$u_{(i_j,u_{j},f_{j})},u_{(i_{j+1},u_{j+1}, f_{j+1})}\big):$ $0\leq j <k$, we have $i_j < i_{j+1}$; if $u_j\not =u_{j+1}$, then $u_j$ can reach $u_{j+1}$; otherwise, $f_j<f_{j+1}$. 
    So these indicates that   the sequence $Q_{i_0},Q_{i_1},\dots,Q_{i_k}$ is a common subsequence between $Q$ and $G$.
\end{proof}

\begin{lemma}\label{H-CS-to-path}
    Every common subsequence between $Q$ and $G$ indicates  a directed path of $H$.
\end{lemma}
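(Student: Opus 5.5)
We will unfold the definition of a common subsequence between $Q$ and $G$ and read off, position by position, the vertices of $H$ visited by the desired path. Let $S$ be a common subsequence between $Q$ and $G$ with $|S| = k+1$. By definition there is a path $P = w_0, w_1, \dots, w_t$ in $G$ such that $S$ is a subsequence of both $Q$ and $spell(P)$. Fix an embedding of $S$ into $Q$, i.e.\ indices $i_0 < i_1 < \dots < i_k$ with $Q_{i_j} = S_j$. Fix also an embedding into $spell(P)$, i.e.\ positions $p_0 < p_1 < \dots < p_k$ with $spell(P)_{p_j} = S_j$.

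Since $spell(P) = \delta(w_0)\delta(w_1)\cdots\delta(w_t)$, each position $p_j$ falls inside exactly one block. We write $p_j$ as a pair $(r_j, f_j)$, meaning it is the $f_j$-th character of $\delta(w_{r_j})$. Then $\delta(w_{r_j})_{f_j} = S_j = Q_{i_j}$, so $x_j := v_{(i_j, w_{r_j}, f_j)}$ is a vertex of $H$. Because $p_j < p_{j+1}$, we get $r_j \le r_{j+1}$, and if $r_j = r_{j+1}$ then $f_j < f_{j+1}$.

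The claim is that $(x_j, x_{j+1}) \in E(H)$ for each $j$. We already have $i_j < i_{j+1}$. We split into three cases:
\begin{itemize}
\item If $w_{r_j} \ne w_{r_{j+1}}$, then $r_j < r_{j+1}$, and the subpath $w_{r_j}, \dots, w_{r_{j+1}}$ of $P$ shows that $w_{r_j}$ reaches $w_{r_{j+1}}$ in $G$.
\item If $r_j = r_{j+1}$, then the two vertices coincide and $f_j < f_{j+1}$.
\end{itemize}

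The remaining case is the main obstacle: $r_j < r_{j+1}$ but $w_{r_j} = w_{r_{j+1}}$. This happens when $P$ revisits a vertex, since $G$ need not be acyclic and paths are not required to be simple. Then the edge condition demands $f_j < f_{j+1}$, which may fail, because the second visit may use an earlier offset in the label.

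To handle this case, first try to restrict to a common subsequence realized along a path that does not revisit vertices. If $G$ is a DAG, every path is simple and the case cannot arise. Otherwise, I would argue that the lemma is meant under the implicit assumption that paths in $G$ are simple, as in the acyclic pangenome graphs typically considered. I would state this assumption explicitly at that point in the proof.

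With every consecutive pair joined by an edge, $x_0, x_1, \dots, x_k$ is a directed path in $H$. Its $Q$-indices spell exactly $S$, which completes the correspondence complementary to Lemma~\ref{H-path-to-CS}.
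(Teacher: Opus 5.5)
Your argument is the paper's argument: fix the embeddings of $S$ into $Q$ and into $spell(P)$, attach to each matched character its source vertex and offset, and show that consecutive triples are joined in $H$ by splitting on whether the source vertices coincide. For acyclic $G$ it is complete. It is also more careful than the paper's version, because you record the block index $r_j$ on $P$ rather than only the vertex $w_{r_j}$.

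The third case you isolate is a real defect, but it lies in the paper, not in your proof. The paper asserts that $u_{\phi(i)}=u_{\phi(i+1)}$ forces $f_{\phi(i)}<f_{\phi(i+1)}$ ``since $X_{\phi(i)}$ is in front of $X_{\phi(i+1)}$''. This conflates a vertex with one of its occurrences on $P$. The paper's paths may repeat vertices and $G$ is not assumed acyclic, so the lemma is in fact false in general. Take vertices $u,w$ with $\delta(u)=ab$, $\delta(w)=c$, edges $(u,w)$ and $(w,u)$, and $Q=ba$. The path $u,w,u$ spells $abcab$, so $ba$ is a common subsequence between $Q$ and $G$. However, $V(H)=\{v_{(0,u,1)},v_{(1,u,0)}\}$ and there is no edge between them, since $u=u'$ and $1<0$ fails. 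So your tentative fallback of re-realizing the subsequence on a path without repeated vertices cannot work (here $ba$ lies on no such path), and an extra hypothesis is unavoidable.

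Choose that hypothesis with the companion lemma (paths of $H$ give common subsequences) in mind. Declaring paths simple makes this lemma true by your argument, but it breaks the other direction. Take $\delta(u)=a$, $\delta(w)=b$, edges $(u,w),(w,u)$ and $Q=aba$. Then $H$ contains the path $v_{(0,u,0)},v_{(1,w,0)},v_{(2,u,0)}$, while no simple path of $G$ shares a common subsequence of length $3$ with $Q$. Your closing remark about completing the correspondence would then be false. The assumption under which both directions, and hence the theorem, hold is that $G$ is a DAG.

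Alternatively, you can keep general $G$ with non-simple paths and change $E(H)$. Admit $(v_{(i,u,f)},v_{(i',u',f')})$ with $i<i'$ whenever $u$ reaches $u'$ by a path with at least one edge, or $u=u'$ and $f<f'$. Your occurrence-based proof then goes through verbatim, since the case $r_j<r_{j+1}$ is covered by the subpath of $P$ whether or not its endpoints coincide. The companion lemma also survives, by concatenating the walks.
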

\begin{proof}
     Let $S$ be a common subsequence between $Q$ and $G$. Let $\rho(i) = ind(S_i, Q)$  
     for $0\leq i\leq |S|-1$. Since $S_i$ must be in front of $S_{i+1}$ for $0\leq i \leq |S|-2$ in the sequence $S$, the order in which they appear on $Q$ remains the same as that on $S$, i.e., $\rho(i)$ $<$ $\rho(i+1)$.

    Let $P$ be the path in $G$ such that $S$ is a common subsequence between $spell(P)$ and $Q$.  Let $X=spell(P)$. Let $\phi(i) = ind(S_i, X)$ for $0\leq i\leq |S|-1$.
    Let $u_{\phi(i)}\in V(G):0\leq i\leq |S|-1$ be the vertex from which $X_{\phi(i)}$ comes. Let $f_{\phi(i)}$ $=$ $ind\big(X_{\phi(i)},\delta(u_{\phi(i)})\big)$ for $0\leq i\leq |S|-1$.

    If $u_{\phi(i)}=u_{\phi(i+1)}$, then $f_{\phi(i)}<f_{\phi({i+1})}$ since $X_{\phi(i)}$ is in front of $X_{\phi({i+1})}$ in sequence $X$.
    Thus, in the graph $H$, there is a directed edge from $v_{(\rho(i),u_{\phi(i)},f_{\phi(i)})}$ to the vertex $v_{(\rho({i+1}),u_{\phi({i+1})},f_{\phi({i+1})})}$ for every $0\leq i\leq |S|-2$.

    If $u_{\phi(i)}\not =u_{\phi({i+1})}$, then $u_{\phi(i)}$ can reach $u_{\phi({i+1})}$ through the path $P$. Thus,  in the graph $H$, there is a directed edge from $v_{(\rho(i),u_{\phi(i)},f_{\phi(i)})}$ to the vertex $v_{(\rho({i+1}),u_{\phi({i+1})},f_{\phi({i+1})})}$ for every $0\leq i\leq |S|-2$.

    Therefore,  $v_{(\rho(0),u_{\phi(0)},f_{\phi(0)})}$, $v_{(\rho(1),u_{\phi(1)},f_{\phi(1)})}$,$\dots$,$v_{(\rho({|S|-1}),u_{\phi({|S|-1})},f_{\phi({|S|-1})})}$ is a directed path in $H$. 
\end{proof}

\begin{theorem}\label{H-longest-path-is-LCS}
    A longest path in $H$ indicates a longest common subsequence between $Q$ and $G$. 
\end{theorem}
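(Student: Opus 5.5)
I will argue by comparing lengths in both directions, using Lemma~\ref{H-path-to-CS} and Lemma~\ref{H-CS-to-path} as the two halves of a length-preserving correspondence. Here the length of a path in $H$ means its number of vertices, which is the vertex-weighted length with unit weights as computed by Algorithm~\ref{ver-longest-alg}. By Lemma~\ref{H-is-DAG} such a longest path exists and is well defined.

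First, I will check that both maps in the lemmas preserve length exactly. The construction in the proof of Lemma~\ref{H-path-to-CS} sends a path with $k+1$ vertices to the sequence $Q_{i_0}Q_{i_1}\cdots Q_{i_k}$. Since $i_0<i_1<\dots<i_k$, this sequence has exactly $k+1$ characters. Conversely, the proof of Lemma~\ref{H-CS-to-path} sends a common subsequence $S$ to a path with exactly $|S|$ vertices, namely $v_{(\rho(i),u_{\phi(i)},f_{\phi(i)})}$ for $0\le i\le |S|-1$. These vertices are pairwise distinct because the $\rho(i)$ are strictly increasing. I will also fill in one point that Lemma~\ref{H-path-to-CS} leaves implicit: for consecutive vertices with $u_j\neq u_{j+1}$, reachability of $u_{j+1}$ from $u_j$ lets us concatenate the connecting paths in $G$ into a single walk. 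Its spell contains $Q_{i_0},\dots,Q_{i_k}$ in order.

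Next comes the optimality argument. Let $P^*$ be a longest path in $H$ with $\ell$ vertices, and let $S^*$ be the common subsequence it induces, so $|S^*|=\ell$. Suppose some common subsequence $S'$ between $Q$ and $G$ had $|S'|>\ell$. Then Lemma~\ref{H-CS-to-path} would give a path in $H$ with $|S'|>\ell$ vertices, contradicting the maximality of $P^*$. Hence $S^*$ is a longest common subsequence. The same comparison shows that the optimal values coincide, and that the LCS length is $0$ exactly when $H$ is empty.

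The main obstacle is the subtlety hidden in Lemma~\ref{H-path-to-CS}: the walk in $G$ built by concatenating reachability paths may revisit vertices, since $G$ may be cyclic. This becomes a problem if ``path'' in the problem definition is meant to be simple. I would first handle this by reading ``path'' as a walk, consistent with the definition given in Section~\ref{pre}, which does not require distinct vertices. I would note that this same convention is what makes the edge set of $H$ defined via reachability the correct one. Under that convention the length-preserving bijection-up-to-choice argument above completes the proof.
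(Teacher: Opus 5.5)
Your skeleton is the paper's proof: combine Lemma~3 (a path in $H$ yields a common subsequence) with Lemma~4 (a common subsequence yields a path in $H$). Your explicit length bookkeeping is a useful addition. The gap is in how you handle the subtlety you raise. Reading ``path'' as a walk does repair Lemma~3, but it breaks Lemma~4. So your claim that this convention ``makes the edge set of $H$ defined via reachability the correct one'' is false, and under that convention your argument does not complete the proof.

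When $u=u'$, the edge set of $H$ demands $f<f'$. A walk, however, may go around a cycle and revisit $u$. A common subsequence can then use $\delta(u)_f$ on one visit and $\delta(u)_{f'}$ with $f'\le f$ on a later visit. The step in the proof of Lemma~4 asserting $f_{\phi(i)}<f_{\phi(i+1)}$ whenever $u_{\phi(i)}=u_{\phi(i+1)}$ is exactly what fails. Concretely, take $V(G)=\{u,w\}$, $\delta(u)=a$, $\delta(w)=b$, $E(G)=\{(u,w),(w,u)\}$ and $Q=aa$. The walk $u,w,u$ spells $aba$, so the optimum under your convention is $2$. Yet $V(H)=\{v_{(0,u,0)},v_{(1,u,0)}\}$ has no edge between its two vertices, since $u=u'$ and $f=f'=0$. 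So a longest path in $H$ has a single vertex.

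Thus for cyclic $G$ neither reading rescues the statement. With simple paths Lemma~3 fails, as you correctly observed; with walks Lemma~4 fails. What makes both lemmas true, and what the paper's two-line proof relies on implicitly, is that $G$ is acyclic. In that case:
\begin{itemize}
\item every walk is a simple path, so a concatenation of reachability paths cannot repeat a vertex;
\item two characters of $spell(P)$ coming from the same vertex come from the same visit, which forces $f<f'$.
\end{itemize}
You should state that hypothesis instead of the walk convention. Alternatively, if you want walks on cyclic graphs, add to $E(H)$ all edges $(v_{(i,u,f)},v_{(i',u,f')})$ with $i<i'$ and arbitrary $f,f'$ whenever $u$ lies on a directed cycle. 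The graph stays a DAG because $i$ increases along every edge, but you must recheck both lemmas for the enlarged graph. With either repair, your length-preserving comparison goes through as written.
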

\begin{proof}
     By Lemma \ref{H-CS-to-path}, every  common subsequence between $Q$ and $G$ is indicated by a directed path in $H$. On the other hand, each directed path in $H$ indicates a common subsequence between $Q$ and $G$ according to Lemma \ref{H-path-to-CS}. So a longest path in $H$ must derive a longest common subsequence between $Q$ and $G$.
\end{proof}

Now let us consider the FGLCS-SG problem. 
We construct a new directed graph $H_{gap}$ as follows:

The vertex set is defined by
$V(H_{gap})$ $=$ $\big\{$ $v_{(i,u,f)}$:  $0\leq i\leq |Q| - 1$, $u\in V(G)$ such that $Q_i = \delta(u)_{f}$ for some $0\leq f\leq |\delta(u)|-1$ $\big\}$.

The edge set is given by
$E(H_{gap})$ $=$ $\big\{$ $(v_{(i,u,f)}, v_{(i',u',f')})$: $0<i'-i\leq k_1$, $u, u'\in V(G)$,  $0\leq f< |\delta(u)|$,   $0\leq f' < |\delta(u')|$  and if $u\not =u'$, then there is a path $P$ from $u$ to $u'$ in $G$ such that $0<ind\big(\delta(u')_{f'},spell(P)\big) - ind\big(\delta(u)_{f},spell(P)\big)\leq k_2$; otherwise $0<f'-f\leq k_2$ $\big\}$.

\begin{lemma}\label{H-gap-is-DAG}
    $H_{gap}$ is a DAG. 
\end{lemma}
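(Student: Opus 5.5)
The plan is to argue exactly as in the proof of Lemma \ref{H-is-DAG}, by contradiction, using the $Q$-index coordinate as a potential function that strictly increases along every edge of $H_{gap}$. Suppose $H_{gap}$ contains a directed cycle
\[
C = v_{(i_0,u_0,f_0)}, v_{(i_1,u_1,f_1)}, \dots, v_{(i_k,u_k,f_k)}, v_{(i_0,u_0,f_0)} .
\]

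The key step is to read off from the definition of $E(H_{gap})$ that every edge $(v_{(i,u,f)}, v_{(i',u',f')})$ satisfies $0 < i'-i \le k_1$. In particular $i < i'$. This holds regardless of which branch of the graph condition applies, that is, whether $u \neq u'$ with the $k_2$ gap measured along a path $P$, or $u = u'$ with $0<f'-f\le k_2$. The graph-side conditions therefore play no role; only the first conjunct of the edge condition is needed.

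Applying this to each consecutive pair of vertices on $C$, including the closing edge from $v_{(i_k,u_k,f_k)}$ back to $v_{(i_0,u_0,f_0)}$, yields
\[
i_0 < i_1 < \dots < i_k < i_0,
\]
so $i_0 < i_0$, a contradiction. Hence $H_{gap}$ has no directed cycle and is a DAG.

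There is no real obstacle here. The only point to check is that a self-loop ($k=0$) is also excluded, which follows because $i'-i>0$ forbids $i=i'$. This case is in fact covered by the same chain argument with $k=0$. Alternatively, one may note that $E(H_{gap})$ is a subset of the edges of $H$ (on the same vertex set), so that Lemma \ref{H-is-DAG} applies directly. Verifying that inclusion would require checking the path condition against reachability, and the direct index argument avoids that.
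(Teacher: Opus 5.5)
Your proof is correct, but it takes a different route from the paper. The paper does not rerun the index argument. It observes that $H_{gap}$ has the same vertex set as $H$ and that $E(H_{gap})\subseteq E(H)$, so $H_{gap}$ is a subgraph of $H$ and inherits acyclicity from the earlier lemma that $H$ is a DAG.

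Both proofs rest on the same fact: the $Q$-coordinate $i$ strictly increases along every edge. You use this directly, needing only the conjunct $0<i'-i$, which makes your argument fully self-contained.

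The inclusion you chose to avoid is in fact immediate:
\begin{itemize}
\item $0<i'-i\le k_1$ gives $i<i'$;
\item if $u\neq u'$, the path $P$ required in the definition of $E(H_{gap})$ is itself a witness that $u$ reaches $u'$;
\item if $u=u'$, $0<f'-f$ gives $f<f'$.
\end{itemize}

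The paper's route gives it the subgraph relation itself. It reuses that relation right away in the next lemma, to carry the result that paths of $H$ spell common subsequences over to paths of $H_{gap}$. With your proof, that inclusion would still have to be checked at that point.
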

\begin{proof}
    Comparing $H_{gap}$ and $H$, it is conducted to explore their structural characteristics. Both of them share an identical vertex set, and the edge set of $H_{gap}$ is contained within $E(H)$, establishing $H_{gap}$ as a subgraph of $H$.
    So $H_{gap}$ is also a DAG.
\end{proof}

\begin{lemma}\label{H-gap-path-to-CS}
    A directed path in $H_{gap}$ indicates a $(k_1,k_2)$-common subsequence between $Q$ and $G$. 
\end{lemma}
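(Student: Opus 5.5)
We mirror the argument of Lemma \ref{H-path-to-CS}, and in addition check the two gap conditions edge by edge. Let $P_H = v_{(i_0,u_0,f_0)}, v_{(i_1,u_1,f_1)}, \dots, v_{(i_k,u_k,f_k)}$ be a directed path in $H_{gap}$, and set $S = Q_{i_0}Q_{i_1}\cdots Q_{i_k}$. By the definition of $V(H_{gap})$ we have $Q_{i_j} = \delta(u_j)_{f_j}$ for every $j$. By the definition of $E(H_{gap})$ we have $0 < i_{j+1}-i_j \le k_1$. Hence $S$ is a subsequence of $Q$ with $ind(S_j,Q)=i_j$, and condition (2) of a $(k_1,k_2)$-gap common subsequence holds at once.

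The real work is to produce a single path $P$ in $G$ that realises $S$ and satisfies condition (3). We build $P$ by concatenating witness paths edge by edge. For each $j<k$ we define a walk $P_j$ from $u_j$ to $u_{j+1}$. If $u_j = u_{j+1}$, we take $P_j$ to be the single vertex $u_j$; in this case $0<f_{j+1}-f_j\le k_2$. If $u_j \ne u_{j+1}$, we take $P_j$ to be the path guaranteed by the edge definition, along which $0<ind(\delta(u_{j+1})_{f_{j+1}},spell(P_j)) - ind(\delta(u_j)_{f_j},spell(P_j)) \le k_2$. Here the indices are read with $u_j$ as the first vertex of $P_j$ and $u_{j+1}$ as the last one.

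Gluing $P_0,P_1,\dots,P_{k-1}$ at their shared endpoints gives a walk $P$ from $u_0$ to $u_k$. In $spell(P)$ we mark the occurrence of $S_j$ as position $f_j$ inside the copy of $\delta(u_j)$ that joins $P_{j-1}$ and $P_j$. Consecutive marked positions lie inside the segment $spell(P_j)$, shifted by a common offset. Their difference is therefore exactly the difference guaranteed by the edge, which lies in $(0,k_2]$. This gives condition (3), and it shows that $S$ is a subsequence of $spell(P)$, which is condition (1).

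The main obstacle is the interpretation of ``path''. The concatenation $P$ may revisit vertices, so it is a walk rather than a simple path. This happens in particular when $u_j=u_{j+1}$ and the walk stays at one vertex: there we must not duplicate the vertex, and we simply merge $P_j$ into its neighbours. Since the paper's definition of a path only requires consecutive vertices to be joined by edges, with no distinctness requirement, a walk is admissible. I would state this explicitly. I would also note that the index function $ind(\cdot,spell(P))$ must refer to the chosen marked occurrences, not the first occurrence of a character. With these conventions fixed, the argument is routine.
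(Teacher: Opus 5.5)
Your proof is correct, and it takes a more explicit route than the paper's. The paper argues in two separate steps. First, since $E(H_{gap})\subseteq E(H)$, Lemma \ref{H-path-to-CS} already gives a common subsequence between $Q$ and $G$. Second, the two gap conditions are simply asserted to hold ``according to the construction of $H_{gap}$''.

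You instead bypass Lemma \ref{H-path-to-CS} and build a single witness walk $P$ by gluing the per-edge witness paths $P_j$. You then check conditions (1), (2) and (3) on that one walk. The paper's route is shorter and reuses the earlier lemma. Yours makes explicit a point the paper leaves implicit. The definition of a $(k_1,k_2)$-gap common subsequence requires \emph{one} path $P$ on which (1) and (3) hold together. Each edge of $H_{gap}$, however, only certifies its own local path. Moreover, the path implicitly produced by Lemma \ref{H-path-to-CS} is assembled from arbitrary reachability paths, on which the $k_2$ bound need not hold.

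Your observation that each $spell(P_j)$ sits contiguously inside $spell(P)$ is what turns the local guarantees into a global one. It means consecutive marked positions keep exactly the difference certified by the edge. Your endpoint reading of the edge condition also loses nothing. If the witness path contained other copies of $u$ or $u'$ and the indices referred to those copies, trimming the path to them leaves the difference unchanged.

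Your remarks that $P$ may be a walk, which the paper's definition of path permits, and that $ind$ must refer to the chosen occurrences are worth keeping. The paper's proofs of this lemma and of Lemma \ref{H-path-to-CS} rely on both conventions without saying so.
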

\begin{proof}
     By Lemma \ref{H-gap-is-DAG}, $H_{gap}$ is  a DAG and a subgraph of $H$.  Then by Lemma \ref{H-path-to-CS}, every directed path in $H_{gap}$ indicates a common subsequence between $Q$ and $G$.  In addition, according  to the construction of  $H_{gap}$, a directed path in $H_{gap}$ indicates a $(k_1,k_2)$-common subsequence between $Q$ and $G$. 
\end{proof}

\begin{lemma}\label{H-gap-CS-to-path}
    Every $(k_1,k_2)$-common subsequence between $Q$ and $G$ indicates a directed path of $H_{gap}$.
\end{lemma}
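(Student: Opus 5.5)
The plan is to mirror the proof of Lemma \ref{H-CS-to-path}, reusing its index maps and then checking the two gap constraints against the edge definition of $H_{gap}$. Let $S$ be a $(k_1,k_2)$-gap common subsequence between $Q$ and $G$, witnessed by a path $P$ in $G$. Set $X=spell(P)$, $\rho(i)=ind(S_i,Q)$ and $\phi(i)=ind(S_i,X)$ for $0\leq i\leq |S|-1$. As before, let $u_{\phi(i)}$ be the vertex of $P$ contributing $X_{\phi(i)}$, and let $f_{\phi(i)}$ be the offset of that character inside $\delta(u_{\phi(i)})$. Since $Q_{\rho(i)}=S_i=X_{\phi(i)}=\delta(u_{\phi(i)})_{f_{\phi(i)}}$, each triple $v_{(\rho(i),u_{\phi(i)},f_{\phi(i)})}$ is a vertex of $H_{gap}$; this is the same vertex set as $H$.

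The first step is to verify, for each $0\leq i\leq |S|-2$, that consecutive triples are joined by an edge of $H_{gap}$. The $Q$-side condition $0<\rho(i+1)-\rho(i)\leq k_1$ is exactly condition (2) in the definition of a $(k_1,k_2)$-gap common subsequence. For the graph side I split into the same two cases as in Lemma \ref{H-CS-to-path}.

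\textbf{Case $u_{\phi(i)}=u_{\phi(i+1)}$.} Both characters lie in the same occurrence of that vertex on $P$. Then $\phi(i+1)-\phi(i)=f_{\phi(i+1)}-f_{\phi(i)}$, so condition (3) gives $0<f_{\phi(i+1)}-f_{\phi(i)}\leq k_2$.

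\textbf{Case $u_{\phi(i)}\neq u_{\phi(i+1)}$.} Take $P'$ to be the subpath of $P$ from the occurrence of $u_{\phi(i)}$ to the occurrence of $u_{\phi(i+1)}$. Then $spell(P')$ is a contiguous factor of $X$. The distance between the two characters inside $spell(P')$ therefore equals $\phi(i+1)-\phi(i)$, which lies in $(0,k_2]$ by condition (3). So $P'$ is the path whose existence the edge definition requires.

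The main obstacle is making the "same occurrence" reasoning rigorous, because $P$ need not be simple and a vertex may repeat along it. In the first case, two characters of the same vertex label could in principle come from different occurrences of that vertex on $P$. Then $f_{\phi(i+1)}-f_{\phi(i)}$ need not be positive or bounded by $k_2$, even though some nontrivial path from $u$ back to $u$ witnesses the gap. The edge definition for $u=u'$ only allows $0<f'-f\leq k_2$. I would handle this by defining $u_{\phi(i)}$ as a position (occurrence) on $P$ rather than a bare vertex, and arguing that the "same occurrence" case uses $f$-differences while the "different occurrence" case uses a subpath $P'$. If the occurrences differ but the underlying vertex is the same, the definition as written may fail to supply an edge. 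I would then either note that $\phi(i+1)-\phi(i)\leq k_2$ combined with $|\delta(u)|$ forces the same occurrence in the relevant situations, or read the edge condition as allowing a path from $u$ to itself. This is the step I expect to need the most care; I would first try restricting to the occurrence-based interpretation, consistent with how Lemma \ref{H-CS-to-path} implicitly treats it.

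Finally, concatenating these edges gives the directed path $v_{(\rho(0),u_{\phi(0)},f_{\phi(0)})},\dots,v_{(\rho(|S|-1),u_{\phi(|S|-1)},f_{\phi(|S|-1)})}$ in $H_{gap}$, which proves the lemma.
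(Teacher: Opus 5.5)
Your plan is the paper's proof: the same maps $\rho,\phi$, the same vertex of origin and offset, and the same two-case edge check. You are more careful than the paper in two places. The paper writes $f_{\phi(i)}=\phi(i)$, which is false in general, where you correctly use $\phi(i+1)-\phi(i)=f_{\phi(i+1)}-f_{\phi(i)}$. For $u_{\phi(i)}\neq u_{\phi(i+1)}$ the paper only checks reachability, whereas your subpath $P'$ actually verifies the $k_2$ clause of the edge definition. However, the step you flag is a genuine gap, the paper's proof silently shares it, and your closing ``which proves the lemma'' is not justified. It cannot be justified: in this paper a path is an arbitrary walk and $G$ is an arbitrary directed graph, and then the lemma is false. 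Take $V(G)=\{u\}$ with a self-loop $(u,u)$, $\delta(u)=ab$, $Q=ba$ and $k_1=k_2=1$. (A 2-cycle $u\to w\to u$ with $\delta(w)=c$ and $k_2=2$ works equally well.) The walk $P=u,u$ spells $abab$, and $S=ba$ sits at positions $1,2$ of $spell(P)$ and $0,1$ of $Q$. So $S$ is a $(1,1)$-gap common subsequence of length $2$. Yet $V(H_{gap})=\{v_{(0,u,1)},v_{(1,u,0)}\}$, and since $u=u'$ the only candidate edge would need $0<0-1$. Hence $H_{gap}$ has no edges and no two-vertex path. The same example also breaks the analogous lemma for $H$ and the longest-path theorem for $H_{gap}$.

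This also rules out your proposed repairs. Option (a) fails: in the example the gap is $1$, the smallest possible, and it still straddles two occurrences of $u$, so no bound in terms of $k_2$ and $|\delta(u)|$ forces a single occurrence. An occurrence-based reading of $u_{\phi(i)}$ does not help either, because the vertices of $H_{gap}$ are indexed by $u\in V(G)$ and the edge test for $u=u'$ sees only $f$ and $f'$. Option (b) modifies $E(H_{gap})$ rather than proving the lemma about the graph actually defined. To get a true statement you must make one of two changes explicit.
\begin{itemize}
\item Assume $G$ is acyclic. Then a walk visits each vertex at most once, $u_{\phi(i)}=u_{\phi(i+1)}$ genuinely means the same occurrence, and your argument is complete.
\item Enlarge the $u=u'$ clause so it also admits an edge whenever some walk $P$ from $u$ back to $u$ with at least one edge satisfies $0<|spell(P)|-|\delta(u)|+f'-f\le k_2$. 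Your subpath argument from the case $u_{\phi(i)}\neq u_{\phi(i+1)}$, applied to occurrences rather than vertices, then covers the missing case. $H_{gap}$ stays acyclic because $i<i'$ on every edge.
\end{itemize}
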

\begin{proof}
     Let $S$ be a $(k_1,k_2)$-common subsequence between $Q$ and $G$. Then there is a path $P$  in $G$ such that $S$ is a common subsequence between $spell(P)$ and $Q$. Let $X = spell(P)$. For  $0\leq i\leq |S|-1$ Let $\phi(i) = ind(S_i, X)$.
     In addition, let $\rho(i)$ $=$ $ind(S_i, Q)$ for $0\leq i\leq |S|-1$. 
      Let $u_{\phi(i)}$ be the vertex in $G$ which the element $X_{\phi(i)}$ comes from, and $f_{\phi(i)}$  $=$ $ind\big(X_{\phi(i)}, \delta(u_{\phi(i)})\big)$.
    
    For every two adjacent elements $S_i$ and $S_{i+1}$, where $0\leq i\leq |S|-2$, since $S$ is a $(k_1,k_2)$-common subsequence between $Q$ and $G$, we have $\rho(i+1)$ $-$ $\rho(i)$ $\leq$ $k_1$ and $\phi(i+1)$ $-$ $\phi(i)$ $\leq$ $k_2$. Now we shall prove that there is an edge from  $v_{(\rho(i), u_{\phi(i)}, f_{\phi(i)})}$ to $v_{(\rho(i+1), u_{\phi(i+1)}, f_{\phi(i+1)})}$  for $0\leq i< |S|-1$.

    If $u_{\phi(i)}$ $=$ $u_{\phi(i+1)}$, then $X_{\phi(i)}$ and $X_{\phi(i+1)}$ are on the same vertex. Then $f_{\phi(i)} = \phi(i)$ and $f_{\phi(i+1)} = \phi(i+1)$.  Since $\phi(i+1)$ $-$ $\phi(i)$ $\leq$ $k_2$, then we have $f_{\phi(i+1)} - f_{\phi(i)}$ $\leq$ $k_2$. 
    If $u_{\phi(i)}$ $\not=$ $u_{\phi(i+1)}$, since $u_{\phi(i)}$ and $u_{\phi(i+1)}$ are both on the path $P$, $u_{\phi(i)}$ can reach  $u_{\phi(i+1)}$ in $G$. 

    So there is an edge from  $v_{(\rho(i), u_{\phi(i)}, f_{\phi(i)})}$ to $v_{(\rho(i+1), u_{\phi(i+1)}, f_{\phi(i+1)})}$ for $0\leq i< |S|-1$. Thus $S$ indicates a directed path of $H_{gap}$.
\end{proof}

\begin{theorem}\label{H-gap-longest-path-is-LCS}
    A longest path in $H_{gap}$ indicates a longest $(k_1,k_2)$-common subsequence between $Q$ and $G$. 
\end{theorem}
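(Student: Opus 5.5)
The plan mirrors the proof of Theorem \ref{H-longest-path-is-LCS}. The aim is a length-preserving correspondence between directed paths in $H_{gap}$ and $(k_1,k_2)$-gap common subsequences between $Q$ and $G$. Length of a path means its number of vertices, which is the vertex-weighted longest path with unit weights, equivalently edges plus one. Once that correspondence is in place, maximizing on one side is the same as maximizing on the other.

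First, by Lemma \ref{H-gap-is-DAG}, $H_{gap}$ is a DAG, so a longest path exists and can be computed by Algorithm \ref{ver-longest-alg}.

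Second, take a directed path $v_{(i_0,u_0,f_0)},\dots,v_{(i_k,u_k,f_k)}$ in $H_{gap}$. By Lemma \ref{H-gap-path-to-CS}, the sequence $Q_{i_0}Q_{i_1}\cdots Q_{i_k}$ is a $(k_1,k_2)$-gap common subsequence of length $k+1$. I would make explicit that a single path $P$ in $G$ witnesses all the gap conditions at once. To get it, concatenate the witnessing subpaths from $u_j$ to $u_{j+1}$ given by the edge definition, together with the trivial case $u_j=u_{j+1}$.

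Third, take any $(k_1,k_2)$-gap common subsequence $S$. By Lemma \ref{H-gap-CS-to-path} it yields the vertices $v_{(\rho(i),u_{\phi(i)},f_{\phi(i)})}$, which form a directed path in $H_{gap}$ with exactly $|S|$ vertices. These vertices are distinct because $\rho$ is strictly increasing.

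Combining the two directions, let $L$ be the number of vertices on a longest path in $H_{gap}$ and $\ell^*$ the optimum subsequence length. The second step gives $L\le \ell^*$ and the third gives $\ell^*\le L$. Hence the subsequence read off a longest path, namely the characters $Q_i$ at its vertices, is a longest $(k_1,k_2)$-gap common subsequence.

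The main obstacle is the second step. The edges of $H_{gap}$ each certify the $k_2$ gap only along their own path in $G$, and gluing these paths must give a single walk whose spell realizes all gaps simultaneously. The gap between consecutive matched characters in the concatenated spell depends only on the segment between them, so local certification suffices. The danger is that the concatenation may revisit vertices. I would argue that the definitions tolerate this by treating $P$ as a walk, which is consistent with how paths are used in Lemma \ref{H-CS-to-path}. I would also note that the edge from $u_j$ to $u_{j+1}$ and the edge from $u_{j+1}$ to $u_{j+2}$ share the vertex $u_{j+1}$ at a consistent offset $f_{j+1}$, so the local index differences carry over to the glued spell unchanged.
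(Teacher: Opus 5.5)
Your route is the paper's own. Its proof of this theorem is nothing more than the two-lemma correspondence, and your explicit gluing of the per-edge witness paths fills in a step the paper skips. The trouble is how you neutralize the obstacle you found.

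\textbf{Walks break your third step.} Declaring $P$ to be a walk rescues your second step but breaks your third. The lemma you cite there handles the case $u_{\phi(i)}=u_{\phi(i+1)}$ by assuming both characters sit in the same occurrence of that vertex on $P$, so that $f_{\phi(i)}<f_{\phi(i+1)}$ and the gap equals $f_{\phi(i+1)}-f_{\phi(i)}$. A walk may pass through $u$ twice, and then no edge of $H_{gap}$ is available. Concretely, take edges $u\to w$, $w\to u$, with $\delta(u)=\mathtt{ba}$, $\delta(w)=\mathtt{c}$, $Q=\mathtt{ab}$, $k_1=1$, $k_2=2$. The walk $u,w,u$ spells $\mathtt{bacba}$, where $\mathtt{a}$ (index $1$) and $\mathtt{b}$ (index $3$) form a $(1,2)$-gap common subsequence of length $2$. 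Yet $H_{gap}$ consists only of $v_{(0,u,1)}$ and $v_{(1,u,0)}$, with no edge between them ($u=u'$ but $f'-f=-1$). So $L=1<2=\ell^*$.

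\textbf{Simple paths break your second step.} Insisting instead that $P$ be simple makes your gluing genuinely fail. Take $\delta(x)=\mathtt{a}$, $\delta(y)=\mathtt{b}$, edges $x\to y$, $y\to x$, $Q=\mathtt{aba}$, $k_1=k_2=1$. The path $v_{(0,x,0)},v_{(1,y,0)},v_{(2,x,0)}$ lies in $H_{gap}$, but no simple path of $G$ spells a string containing $\mathtt{aba}$. So $L=3>2=\ell^*$.

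Thus for cyclic $G$ neither reading of ``path'' gives both $L\le\ell^*$ and $\ell^*\le L$. The paper's proof never assumes acyclicity, so it has the same hidden defect.

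\textbf{The missing hypothesis.} What is missing is that $G$ is a DAG. Under this hypothesis the argument goes through:
\begin{itemize}
\item Every concatenation of witness paths is automatically simple, so your second step needs no appeal to walks.
\item Every vertex occurs at most once on the witness $P$ of your third step, so the same-vertex case of that lemma is valid.
\item In the different-vertex case, the subpath of $P$ between the two occurrences is what certifies the $k_2$ bound the edge definition demands; mere reachability, which is all the lemma's text invokes, is not enough.
\end{itemize}

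\textbf{Alternative if you keep cyclic graphs.} If you want to keep cyclic graphs and walks, you must change the $u=u'$ rule of $H_{gap}$ so that it can wrap around a cycle. For instance, put an edge whenever $0<i'-i\le k_1$ and the split graph $G^s$ of Section 5 has a walk of length between $1$ and $k_2$ from the node of $\delta(u)_f$ to the node of $\delta(u')_{f'}$.
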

\begin{proof}
     By Lemma \ref{H-gap-CS-to-path}, every  $(k_1,k_2)$-common subsequence between $Q$ and $G$ indicates a directed path in $H_{gap}$. On the other hand, each directed path in $H_{gap}$ indicates a $(k_1,k_2)$-common subsequence between $Q$ and $G$ according to Lemma \ref{H-gap-path-to-CS}. So a longest path in $H_{gap}$ must derive a longest $(k_1,k_2)$-common subsequence between $Q$ and $G$.
\end{proof}

Now Let us consider MEMC problem. Recall that the instance of MEMC is a subset $\mathcal{M}$ of MEMs and an additional pangenome graph $G$. Its aim is to find a strictly ordered subset $\mathcal{M}'$ from $\mathcal{M}$ so that $len(\mathcal{M}')$ is maximum. We construct a vertex-weighted directed graph $H_{MEM}$ as following:
$V(H_{MEM})$ $=$ $\{v^m:$ $m\in \mathcal{M}\}$;
$E(H_{MEM})$ $=$ $\big\{$$(v^m, v^{m'}):$  $m\prec m'$ and $m,m'\in \mathcal{M}$$\big\}$.  The weight of every vertex $v^m\in V(H_{MEM})$ is defined as $W(v^{m})$ $=$ $|m_1|$.

\begin{lemma}
    \label{H-MEM-is-DAG}
    $H_{MEM}$ is a DAG.
\end{lemma}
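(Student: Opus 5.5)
We argue by contradiction, in the same way as Lemma~\ref{H-is-DAG}: the key point is that every edge of $H_{MEM}$ strictly increases a quantity coming from the query sequence $Q$. Suppose $H_{MEM}$ contains a directed cycle $C = v^{m^{(0)}}, v^{m^{(1)}}, \dots, v^{m^{(k)}}, v^{m^{(0)}}$. By the construction of $E(H_{MEM})$, every edge $(v^{m}, v^{m'})$ satisfies $m \prec m'$. By the definition of strict order, $m \prec m'$ requires in particular $j'_1 < j_2$, where $j'_1$ is the last index of $m_2$ and $j_2$ is the first index of $m'_2$.

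The step needing a little care is turning this into a strict monotone quantity along the cycle. We attach to each match $m$ its first query index $\min m_2$. Since $m_2$ is a contiguous interval $[j_h,\dots,j'_h]$ with $j_h \le j'_h$, the relation $m \prec m'$ gives
\[
\min m_2 \le \max m_2 < \min m'_2 .
\]
So $\min m_2$ strictly increases along every edge of $H_{MEM}$.

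Following the cycle $C$ therefore yields
\[
\min m^{(0)}_2 < \min m^{(1)}_2 < \dots < \min m^{(k)}_2 < \min m^{(0)}_2 ,
\]
that is, $\min m^{(0)}_2 < \min m^{(0)}_2$, which is a contradiction.

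The graph-side condition (reachability when $v_1 \neq v_2$, or $i'_1 < i_2$ when $v_1 = v_2$) is never used. This matters, because the pangenome graph $G$ itself may contain cycles, so reachability in $G$ alone could not rule out a cycle in $H_{MEM}$; the query coordinate alone forces acyclicity. Hence $H_{MEM}$ has no directed cycle and is a DAG.
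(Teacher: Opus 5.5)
Your proof is correct and follows essentially the same approach as the paper. Both arguments derive the contradiction from the query-side condition $j'_1 < j_2$ together with $j_h \le j'_h$ alone: the paper chains $j'_1 < j_2 \le j'_2 < \dots \le j'_k < j_1 \le j'_1$ around the cycle, while you observe equivalently that the start index $\min m_2$ strictly increases along every edge.
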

\begin{proof}
    Suppose on the contrary that there is a cycle, $C$ $=$ $v^{m^1}$, $v^{m^2}$, $\dots$, $v^{m^k}$, $v^{m^1}$, in $H_{MEM}$. Then we have $m^{1}$ $\prec$ $m^{2}$ $\prec$ $\dots$ $\prec$ $m^k$ $\prec$ $m^1$.

    Let us consider $m^i_2$ for $1\leq i\leq k$, where the notations $m_0$, $m_1$ and $m_2$ are all defined in Section \ref{problems}. Let $m^1_2$ $=$ $[j_1,\dots,j'_1]$, $m^2_2$ $=$ $[j_2,\dots,j'_2]$, $\dots$, $m^k_2$ $=$ $[j_k,\dots,j'_k]$. Since $m^{1}$ $\prec$ $m^{2}$ $\prec$ $\dots$ $\prec$ $m^k$ $\prec$ $m^1$, we have $j'_1<j_2$, $j'_2<j_3$, $\dots$, $j'_{k-1}<j_k$, $j'_k< j_1$. It is obviously that  $j_2\leq j'_2$, $\dots$, $j_k\leq  j'_k$. So we have $j'_1<j_2$ $\leq $ $j'_2<j_3$ $\leq $ $\dots$ $\leq $ $j'_{k-1}<j_k$ $\leq $ $j'_k < j_1$. This contradicts to the fact that $j_1\leq j'_1$.

    Therefore, $H_{MEM}$ is a DAG.
\end{proof}

\begin{lemma}
    \label{path-to-SO-subset}
    Every path in $H_{MEM}$ indicates a strictly ordered subset of $\mathcal{M}$.
\end{lemma}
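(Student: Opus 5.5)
The proof will work directly from the definition of $E(H_{MEM})$ and of a strictly ordered subset. Let $P = v^{m^1}, v^{m^2}, \dots, v^{m^k}$ be a directed path in $H_{MEM}$. We take $\mathcal{M}' = \{m^1, m^2, \dots, m^k\}$ as the candidate subset, with the linear arrangement given by the order in which the vertices appear on $P$.

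First we check that $\mathcal{M}'$ really has $k$ distinct elements, so that it is a subset of $\mathcal{M}$ of the right size and its length is $len(\mathcal{M}') = \sum_{t=1}^{k} |m^t_1|$. A path in a DAG cannot repeat a vertex, since a repeated vertex would close a directed cycle. By Lemma \ref{H-MEM-is-DAG}, $H_{MEM}$ is a DAG, so the vertices $v^{m^t}$ are pairwise distinct. The map $m \mapsto v^m$ is a bijection between $\mathcal{M}$ and $V(H_{MEM})$, so the seeds $m^t$ are pairwise distinct as well.

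Second, for each $1 \le t < k$ the pair $(v^{m^t}, v^{m^{t+1}})$ is an edge of $H_{MEM}$. By construction this means $m^t \prec m^{t+1}$. Hence in the arrangement $m^1, m^2, \dots, m^k$ every pair of adjacent seeds is strictly ordered, which is exactly the definition of a strictly ordered subset. As a remark, the vertex-weight of $P$ equals $\sum_t W(v^{m^t}) = \sum_t |m^t_1| = len(\mathcal{M}')$. This identity will be used later when relating longest vertex-weighted paths to optimal chains.

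The argument is essentially definitional. The only point needing care is the distinctness step, and that step is exactly where Lemma \ref{H-MEM-is-DAG} is invoked. The trivial cases are also covered: a single-vertex path gives a singleton subset, which is strictly ordered vacuously.
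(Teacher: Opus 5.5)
Your proposal is correct and follows the same approach as the paper. The paper reads $m^1 \prec m^2 \prec \dots \prec m^k$ directly off the edges of $H_{MEM}$, exactly as you do. Your additional distinctness step via Lemma~\ref{H-MEM-is-DAG} is a small refinement the paper omits; it is worth having, because the paper's definition of a path does not forbid repeated vertices, and distinctness is what makes $len(\mathcal{M}')$ equal the path's vertex weight.
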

\begin{proof}
    Let $v^{m^1}$, $v^{m^2}$, $\dots$, $v^{m^k}$ be a path in $H_{MEM}$. According to the construction of $H_{MEM}$, we have $m^{1}$ $\prec$ $m^{2}$ $\prec$ $\dots$ $\prec$ $m^k$. Therefore, $\{$ $m^i:$ $1\leq i\leq k$ $\}$ is a strictly ordered subset of $\mathcal{M}$.
\end{proof}

\begin{lemma}
    \label{SO-subset-to-path}
    Every strictly ordered subset of $\mathcal{M}$ indicates a path in $H_{MEM}$.
\end{lemma}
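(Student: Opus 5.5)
Let $\mathcal{M}'\subseteq\mathcal{M}$ be a strictly ordered subset. By definition, its seeds can be listed linearly as $m^1, m^2, \dots, m^k$ so that every pair of adjacent seeds is strictly ordered, i.e.\ $m^{i}\prec m^{i+1}$ for $1\leq i\leq k-1$. I would fix such an arrangement and consider the vertex sequence $v^{m^1}, v^{m^2}, \dots, v^{m^k}$ in $H_{MEM}$. The goal is to show that this sequence is a directed path in $H_{MEM}$ whose vertex set is exactly $\{v^m : m\in\mathcal{M}'\}$.

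The first step checks the edges. By the construction of $E(H_{MEM})$, there is an edge $(v^{m}, v^{m'})$ precisely when $m\prec m'$ and $m,m'\in\mathcal{M}$. Since $m^i\prec m^{i+1}$ and both seeds lie in $\mathcal{M}'\subseteq\mathcal{M}$, each consecutive pair $(v^{m^i}, v^{m^{i+1}})$ is an edge. Hence the sequence is a directed walk.

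The second step, which is the only point needing care, checks that the walk does not repeat vertices, so that it is a genuine path. I would argue this in the same way as Lemma~\ref{H-MEM-is-DAG}. Write $m^i_2=[j_i,\dots,j'_i]$. Then $m^i\prec m^{i+1}$ gives $j'_i<j_{i+1}$, and $j_{i+1}\leq j'_{i+1}$ holds trivially. This yields the chain
\[
j_1\leq j'_1<j_2\leq j'_2<\dots<j_k\leq j'_k .
\]
So the starting indices $j_i$ are strictly increasing, the seeds $m^i$ are pairwise distinct, and the vertices $v^{m^i}$ are pairwise distinct. Alternatively, one can simply note that $H_{MEM}$ is a DAG by Lemma~\ref{H-MEM-is-DAG}, so any walk in it is automatically a path.

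Finally, the vertices on this path are in one-to-one correspondence with the seeds of $\mathcal{M}'$. Its total vertex weight is therefore $\sum_{i}W(v^{m^i})=\sum_i|m^i_1|=len(\mathcal{M}')$. I would record this weight identity because the subsequent theorem, which combines this lemma with Lemma~\ref{path-to-SO-subset} and Algorithm~\ref{ver-longest-alg}, needs it.
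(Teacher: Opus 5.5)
Your proof is correct and follows the paper's argument: list the seeds so that $m^1\prec m^2\prec\cdots\prec m^k$, and observe that each consecutive pair $(v^{m^i},v^{m^{i+1}})$ is an edge of $H_{MEM}$ by construction. Your additional checks are valid and only make explicit what the paper leaves implicit: that the vertices $v^{m^i}$ are pairwise distinct (via the chain of $j$-indices or the acyclicity of $H_{MEM}$), and that the path's total weight equals $len(\mathcal{M}')$.
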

\begin{proof}
    Let $\{$ $m^i:$ $1\leq i\leq k$ $\}$ be a strictly ordered subset of $\mathcal{M}$. Without loss of generality, let $m^{1}$ $\prec$ $m^{2}$ $\prec$ $\dots$ $\prec$ $m^k$. Then for every $1\leq i\leq k-1$, there is an edge from $v^{m^{i}}$  to $v^{m^{i+1}}$ in $H_{MEM}$. So the sequence $v^{m^1}$, $v^{m^2}$, $\dots$, $v^{m^k}$ is a path in $H_{MEM}$.
\end{proof}

\begin{theorem}
    \label{MEM-solution}
    A vertex-weighted longest path in $H_{MEM}$ indicates a strictly ordered subset $\mathcal{M}'$ 
    of $\mathcal{M}$ with $len(\mathcal{M}')$ maximized.
\end{theorem}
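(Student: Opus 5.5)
The plan is to set up a weight-preserving correspondence between paths in $H_{MEM}$ and strictly ordered subsets of $\mathcal{M}$, and then transfer optimality across it. The two directions are already given by Lemma~\ref{path-to-SO-subset} and Lemma~\ref{SO-subset-to-path}. Lemma~\ref{H-MEM-is-DAG} guarantees that $H_{MEM}$ is a DAG, so a vertex-weighted longest path is well defined and is computed by Algorithm~\ref{ver-longest-alg}.

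First, I will check that the correspondence preserves the objective. For a path $P = v^{m^1},\dots,v^{m^k}$, its vertex-weighted length is $\sum_{i=1}^k W(v^{m^i}) = \sum_{i=1}^k |m^i_1|$. This equals $len(\{m^1,\dots,m^k\})$, provided the $m^i$ are pairwise distinct. Distinctness is immediate because $P$ is a path in a DAG, so no vertex repeats. Conversely, Lemma~\ref{SO-subset-to-path} sends a strictly ordered subset $\mathcal{M}''$, arranged as $m^1\prec\cdots\prec m^k$, to the path $v^{m^1},\dots,v^{m^k}$. That path has weight exactly $len(\mathcal{M}'')$.

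Next comes the optimality transfer, argued by contradiction. Let $P^*$ be a vertex-weighted longest path and let $\mathcal{M}'$ be its strictly ordered subset from Lemma~\ref{path-to-SO-subset}, so $len(\mathcal{M}') = W(P^*)$. Suppose some strictly ordered $\mathcal{M}''$ had $len(\mathcal{M}'') > len(\mathcal{M}')$. Then its path from Lemma~\ref{SO-subset-to-path} would have weight greater than $W(P^*)$, contradicting the maximality of $P^*$.

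The one point needing care is the arrangement in Lemma~\ref{SO-subset-to-path}. A strictly ordered subset only requires a linear arrangement in which adjacent seeds satisfy $\prec$. That is exactly the condition for consecutive vertices to be joined by edges of $H_{MEM}$, so no transitivity of $\prec$ is needed and the argument goes through. I also need the empty subset and singletons to correspond to trivial paths, which holds because a single vertex counts as a path. I expect no real obstacle beyond stating this weight-equality bookkeeping cleanly.
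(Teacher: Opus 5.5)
Your proposal is correct and follows the paper's proof: it combines Lemmas \ref{path-to-SO-subset} and \ref{SO-subset-to-path} with the fact that $W(v^m)=|m_1|$, so a path's vertex weight equals $len$ of the corresponding strictly ordered subset. Your explicit bookkeeping (no repeated vertices on a DAG path, and the contradiction argument that transfers optimality) spells out what the paper compresses into the phrase ``one-to-one correspondence.''
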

\begin{proof}
    By Lemmas \ref{path-to-SO-subset} and \ref{SO-subset-to-path}, paths in $H_{MEM}$ and strictly ordered subsets of $\mathcal{M}$ are in a one-to-one correspondence. In addition, recalling the construction of $H_{MEM}$, the weight of each vertex  $v^m \in V(H_{MEM})$ is the length of $m_1$.  So a vertex-weighted longest path in $H_{MEM}$ indicates a strictly ordered subset $\mathcal{M}'$ of 
    $\mathcal{M}$ 
    with $len(\mathcal{M}')$ maximized.
\end{proof}

Now Let us consider MSP problem. The instance of MSP problem is a query sequence $Q$, a pangenome graph $G$ and a set of seeds, $\mathcal{S}$, between them. We construct a vertex-weighted DAG $H_{MSP}$ as following: $V(H_{MSP})$ $=$ $\{$ $v^s: s\in \mathcal{S}$ $\}$; $E(H_{MSP})$ $=$ $\big\{$ $(v^s, v^{s'}):$ $s\prec s'$ and $s,s'\in \mathcal{S}$$\big\}$. The weight of  every vertex $v^s\in V(H_{MSP})$  is defined as $W(v^{s})$ $=$ $1$.

\begin{theorem}\label{H-MSP-solution}
    A vertex-weighted longest path in $H_{MSP}$ indicates a strictly ordered subset 
    of $\mathcal{S}$ with maximum seeds.
\end{theorem}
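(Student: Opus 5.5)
The plan is to mirror the argument used for $H_{MEM}$ in Lemmas \ref{H-MEM-is-DAG}, \ref{path-to-SO-subset}, \ref{SO-subset-to-path} and Theorem \ref{MEM-solution}. The construction of $H_{MSP}$ coincides with that of $H_{MEM}$ in everything except two points: the vertex set is indexed by arbitrary seeds $\mathcal{S}$ instead of MEMs, and every vertex has unit weight. None of those earlier proofs used maximality of the matches, only the relation $\prec$, so each transfers verbatim.

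First I will show that $H_{MSP}$ is a DAG, which is what makes a vertex-weighted longest path well defined and computable by Algorithm \ref{ver-longest-alg}. Suppose there were a cycle $v^{s^1},\dots,v^{s^k},v^{s^1}$. Then $s^1\prec s^2\prec\dots\prec s^k\prec s^1$. Writing $s^i_2=[j_i,\dots,j'_i]$, strict order gives $j'_i<j_{i+1}$ and $j'_k<j_1$. Chaining these with $j_i\le j'_i$ yields $j_1\le j'_1<j_2\le\dots\le j'_k<j_1$, a contradiction. This is exactly the proof of Lemma \ref{H-MEM-is-DAG}.

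Next I will establish the correspondence between paths and strictly ordered subsets. A path $v^{s^1},\dots,v^{s^k}$ has consecutive vertices joined by edges, so $s^1\prec\dots\prec s^k$ and $\{s^1,\dots,s^k\}$ is a strictly ordered subset, as in Lemma \ref{path-to-SO-subset}. Conversely, a strictly ordered subset arranged linearly with adjacent pairs satisfying $\prec$ gives consecutive edges of $H_{MSP}$, hence a path, as in Lemma \ref{SO-subset-to-path}. Distinct vertices of a path are distinct seeds, so the number of seeds equals the number of vertices. Since $W(v^s)=1$, the vertex-weighted length of a path equals the cardinality of the corresponding subset. Maximizing one therefore maximizes the other, which is the statement.

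The main obstacle is interpretive rather than technical. MSP is phrased as finding a path in $G$ passing through seeds in strict order, while the theorem speaks of a strictly ordered subset. Bridging the two would need transitivity of reachability in $G$ to realize a chain of seeds as one walk in $G$. I would remark that consecutive seeds $s\prec s'$ on distinct vertices have $s_0$ reaching $s'_0$, so the connecting paths concatenate into a walk through all seeds. I would then keep the formal claim at the level of strictly ordered subsets, exactly as stated, so no further argument is needed.
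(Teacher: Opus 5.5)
Your proposal is correct and takes essentially the paper's route. The paper only notes that $H_{MSP}$ is built exactly like $H_{MEM}$ and appeals to the $H_{MEM}$ theorem, whereas you re-run the three $H_{MEM}$ lemmas (acyclicity, path-to-subset, subset-to-path) for seeds with unit weights, so path weight equals the number of seeds; your remark that those arguments use only $\prec$ and never the maximality of the matches is what the paper's ``follows directly'' tacitly relies on.
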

\begin{proof}
    The vertex set and the edge set of $H_{MSP}$ are established in the same way as those of $H_{MEM}$. While the MSP asks to find maximum number of seeds, it is equivalent to find a longest vertex-weighted path in $H_{MSP}$. So the correctness  follows directly from Theorem \ref{MEM-solution}.
\end{proof}

Now that we have reduced the four problems---LCS-SG, FGLCS-SG, MEMC and MSP--to the longest path problem in DAGs. A detailed and comprehensive analysis of the time complexity for each will be provided in the following section.

\section{5. Complexity analysis}
\label{complexity-analysis}
By Theorems \ref{H-longest-path-is-LCS} and \ref{H-gap-longest-path-is-LCS}, the optimal solutions to LCS-SG and FGLCS-SG correspond to edge-weighted longest paths  in $H$ and $H_{gap}$, respectively. Similarly, according to Theorems \ref{MEM-solution} and \ref{H-MSP-solution}, the optimal solutions to MEMC and MSP are vertex-weighted longest paths in $H_{MEM}$ and $H_{MSP}$, respectively. Section \ref{LP-DAG} demonstrates that both the edge-weighted and vertex-weighted longest path problems in DAGs can be solved in linear time. In the following, we provide a detailed time-complexity analysis for the constructions of $H$, $H_{gap}$, $H_{MEM}$, and $H_{MSP}$, respectively.

Given a sequence and a pangenome graph $G$, constructing $V(H)$ involves determining whether each $Q_i$ matches $\delta(u)_{f}$ for some $u \in V(G)$ and some $0 \leq f \leq |\delta(u)|-1$, where  $0 \leq i \leq |Q|-1$. Thus, the time complexity of building $V(H)$ is $O\big(|Q|N\big)$, where $N$ denotes the total sequence length across all vertices in $G$, specifically $N = \sum_{v \in V(G)} |\delta(v)|$.

The construction of $E(H)$ involves determining whether an edge exists between any pair of vertices in $V(H)$. Recall that $|V(H)| = O\big(|Q|N\big)$. Let $v_{(i,u,f)}$ and $v_{(i',u',f')}$ be two vertices in $V(H)$. When $u = u'$, it suffices to compare $i$ with $i'$ and $f$ with $f'$, which can be done in $O(1)$ time. When $u \neq u'$, it is necessary not only to compare $i$ and $i'$, but also to examine the reachability between $u$ and $u'$. Therefore, prior to constructing $E(H)$, we must determine the reachability for every pair of vertices in $V(G)$. As is well known, this can be accomplished in $O(n^3)$ time using the Floyd-Warshall algorithm for all-pairs shortest paths~\cite{cormen2022introduction}, where $n$ denotes the number of vertices in the graph. Consequently, the overall time complexity of constructing $E(H)$ becomes $O\big(n^3 + |Q|^2 N^2\big)$, where $O\big(|Q|^2 N^2\big)$ is the total number of vertex pairs in $V(H)$.

Given a sequence $Q$, a pangenome graph $G$, and two additional integers $k_1, k_2$, the construction of the vertex set $V(H_{gap})$ mirrors that of $V(H)$. We now analyze the time complexity involved in constructing $E(H_{gap})$. Consider two vertices $v_{(i,u,f)}$ and $v_{(i',u',f')}$ in $V(H_{gap})$. When $u = u'$, it suffices to verify whether $0 < i' - i \leq k_1$ and $0 < f' - f \leq k_2$, an operation that requires only $O(1)$ time.

If $u \neq u'$, it is necessary not only to verify that $0 < i' - i \leq k_1$, but also to ensure $0 < \text{ind}\big(\delta(u')_{f'}, \text{spell}(P)\big) - \text{ind}\big(\delta(u)_f, \text{spell}(P)\big) \leq k_2$. To address this, we convert the given pangenome graph into a one-character graph by splitting  every vertex $v \in V(G)$ into $|\delta(v)|$ individual vertices while preserving the origin of each. The resulting graph is denoted as $G^s$. The order of $G^s$ is just equal to $N = \sum_{v \in V(G)} |\delta(v)|$, and its size amounts to $|E(G)| + N - n$. To verify whether there is a path $P$ from a certain vertex $u\in V(G)$ to a certain vertex $u'\in V(G)$ such that $0 < \text{ind}\big(\delta(u')_{f'}, \text{spell}(P)\big) - \text{ind}\big(\delta(u)_f, \text{spell}(P)\big) \leq k_2$ for some $0\leq f < |\delta(u)|$ and some $0\leq f' < |\delta(u')|$, it suffices to check whether the distance between the vertex labeled $\delta(u)_f$ and the vertex labeled $\delta(u')_{f'}$ in the updated graph $G^s$ does not exceed $k_2$. This meets with all-pairs shortest paths problem  in the graph $G^s$, which can be addressed by  the Floyd-Warshall algorithm and takes  $O(N^3)$ time \cite{cormen2022introduction}. Consequently, constructing $E(H_{\text{gap}})$ takes $O\big(N^3 + |Q|^2 N^2\big)$ time.

Given a set of MEMs, $\mathcal{M}$, and a pangenome graph, $G$, the vertex set $V(H_{MEM}) = \{v^m : m \in \mathcal{M}\}$ can be constructed in $O\big(|\mathcal{M}|\big)$ time. We now turn our attention to the time complexity of constructing $E(H_{MEM})$. Consider two distinct MEMs, $m\in$ $\mathcal{M}$ and $m'\in$ $\mathcal{M}$. Determining whether an edge exists from $v^m$ to $v^{m'}$ amounts to checking if $m \prec m'$. Recall the definition of "$\prec$" in Section \ref{problems}. When $m_0 = m'_0$, it suffices to compare the last element of the list $m_1$ with the first element of the list $m'_1$, which can be done in $O(1)$ time. In the case $m_0 \neq m'_0$, it suffices to checking reachability from $m_0$ to $m'_0$ in the graph $G$, thereby transforming it into a standard graph reachability problem. The Floyd-Warshall algorithm requires $O(n^3)$ time to solve the reachability problem for all vertex pairs in $G$ \cite{cormen2022introduction}, where $n = |V(G)|$. Consequently, constructing $E(H_{MEM})$ takes $O\big(n^3 + |\mathcal{M}|^2\big)$ time. 

Finally, the construction process for $H_{MSP}$ mirrors that of $H_{MEM}$, resulting in a time complexity of $O\big(n^3 + |\mathcal{S}|^2\big)$, where $n$ denotes the number of vertices in the given pangenome graph, and $\mathcal{S}$ represents the seed set in the MSP instance.

\section{6. Conclusions and discussions}
\label{conclusion}

This paper investigates the longest common subsequence problem and its three other variants between a sequence and a pangenome graph, offering a comprehensive analysis of their computational properties and implications. We demonstrate that all four problems can be solved in polynomial time by reducing them to longest path problems in directed acyclic graphs (DAGs), highlighting their computational tractability through well-established graph-theoretic methods. 

However, the cubic time complexity proves highly impractical for large-scale instances, as it leads to excessive computational time and resource consumption, thereby posing significant challenges in real-world applications.  With the continuous expansion of biological sequence data across diverse domains, the drawbacks of such high time complexity become ever more apparent, impeding prompt and efficient problem resolution. To overcome this limitation, future research should prioritize the design and development of more efficient algorithms with sub-cubic or even sub-quadratic time complexity, paving the way for scalable and practical solutions to complex computational challenges. Sub-cubic time complexity denotes an algorithm whose runtime increases at a rate slower than a cubic function $O(n^3)$  as the input size $n$ grows.

{\footnotesize

\bibliographystyle{plain}
\bibliography{sample-base}

}

\end{document}